\documentclass[12pt,a4paper]{article}
\usepackage{amsmath,amssymb,amsthm} 
\usepackage[dvipdfmx]{graphicx}
\numberwithin{equation}{section}
\newtheorem{lemma}{Lemma}
\newtheorem{theorem}{Theorem}

\theoremstyle{remark}
\newtheorem{remark}{Remark}
\newcommand{\beq}{\begin{equation}}
\newcommand{\eeq}{\end{equation}}
\newcommand{\beqnn}{\begin{equation*}}
\newcommand{\eeqnn}{\end{equation*}}
\newcommand{\rd}{\partial}

\newcommand{\tp}[1]{\,{\vphantom{#1}}^\mathrm{t}\!\,#1}
\newcommand{\CC}{\mathbb{C}}
\newcommand{\PP}{\mathbb{P}}

\newcommand{\ZZ}{\mathbb{Z}}
\newcommand{\calP}{\mathcal{P}}
\newcommand{\ctv}{\mathrm{ctv}}
\newcommand{\cl}{\mathrm{cl}}
\newcommand{\bsx}{\boldsymbol{x}}
\newcommand{\bsy}{\boldsymbol{y}}

\begin{document}

\title{\bf Open string amplitudes of closed topological vertex}
\author{Kanehisa Takasaki\thanks{E-mail: takasaki@math.h.kyoto-u.ac.jp}\\
{\normalsize Department of Mathematics, Kinki University}\\ 
{\normalsize 3-4-1 Kowakae, Higashi-Osaka, Osaka 577-8502, Japan}\\
\\
Toshio Nakatsu\thanks{E-mail: nakatsu@mpg.setsunan.ac.jp}\\
{\normalsize Institute for Fundamental Sciences, Setsunan University}\\
{\normalsize 17-8 Ikeda Nakamachi, Neyagawa, Osaka 572-8508, Japan}}
\date{}
\maketitle 

\begin{abstract}
The closed topological vertex is the simplest ``off-strip'' case 
of non-compact toric Calabi-Yau threefolds with acyclic web diagrams. 
By the diagrammatic method of topological vertex, 
open string amplitudes of topological string theory therein 
can be obtained by gluing a single topological vertex 
to an ``on-strip'' subdiagram of the tree-like web diagram.  
If non-trivial partitions are assigned to just two parallel 
external lines of the web diagram, the amplitudes can be calculated 
with the aid of techniques borrowed from the melting crystal models.  
These amplitudes are thereby expressed as matrix elements, 
modified by simple prefactors, of an operator product on the Fock space 
of 2D charged free fermions.  This fermionic expression can be used 
to derive $q$-difference equations for generating functions of 
special subsets of the amplitudes.  These $q$-difference equations 
may be interpreted as the defining equation of a quantum mirror curve. 

\end{abstract}

\begin{flushleft}
%Pacs Numbers: 02.30.Ik, 11.25.Hf\\
2010 Mathematics Subject Classification: 
17B81, %% Lie algebras/group, applications to physics
33E20, %% other special functions defined by series and integral
81T30 %% string/super string theory 
\\
Key words:  closed topological vertex, open string amplitude, 
free fermion, quantum torus, shift symmetry, 
q-difference equation, mirror curve
\end{flushleft}

\newpage %%%%% 

\section{Introduction}

Topological vertex \cite{AKMV03} is a diagrammatic method 
that captures A-model topological string theory  
on non-compact toric Calabi-Yau threefolds.  
In the case of ``on-strip'' geometry 
(see Appendix A for a precise setup), 
this method works particularly well to calculate 
both the closed string partition function and 
open string amplitudes in an explicit form \cite{IKP04}. 
Since the calculation in the on-strip case relies 
on the {\it linear} shape of the toric diagram, 
it is a technical challenge to extend this result 
to an ``off-strip'' case.   

The closed topological vertex \cite{BK03} 
is one of the simplest examples of ``off-strip'' geometry. 
Its web diagram is acyclic (in other words, 
the threefold has no compact 4-cycle), 
and the toric diagram has a triangular shape 
(see Figure \ref{fig1}). 
The closed string partition function in this case 
is calculated by several methods including 
topological vertex \cite{BK03,KLM05,Sulkowski06}. 
The final expression of the partition function 
resembles the on-strip case, but the method of derivation 
is more subtle.  Moreover, Karp, Liu and Mari\~no 
\cite[Section 6.4]{KLM05} argued that 
such a closed expression of the partition function 
will cease to exist if branches of the tree-like web diagram 
are prolonged to arbitrary lengths. In this sense, 
the closed topological vertex is rather special 
among off-strip geometry without compact 4-cycle. 

In this paper, we calculate open string amplitudes 
of the closed topological vertex in the case where 
non-trivial boundary conditions of the string world sheet 
are imposed on two parallel external lines of the web diagram. 
Although lacking full generality, this is the first attempt 
in the literature to calculate open string amplitudes 
of the closed topological vertex explicitly.  
Moreover, we use this result to derive $q$-difference equations 
for generating functions of special subsets 
of these amplitudes.  In the perspectives of mirror geometry 
of topological string theory \cite{ADKMV03,DV07}, 
the $q$-difference equations may be interpreted 
as the defining equations of a ``quantum mirror curve''.  
This quantum mirror curve will be a new example of quantum curves 
in the topological recursion program \cite{GS11}. 

\begin{figure}[h]
\centering
\includegraphics[scale=0.7]{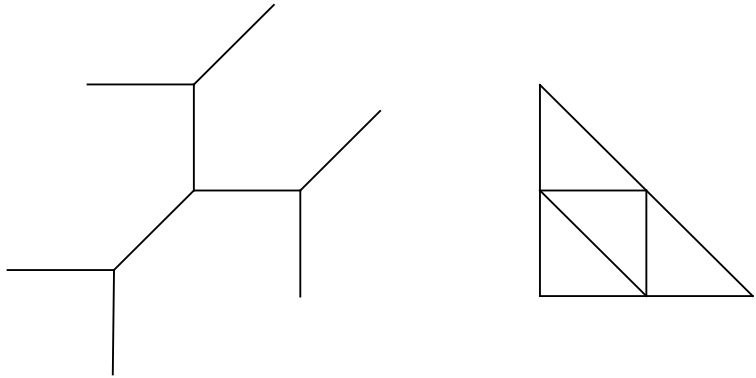}
\caption{Web diagram (left) and toric diagram (right) 
of closed topological vertex}
\label{fig1}
\end{figure}

To calculate the open string amplitudes in question, 
we use techniques that were developed 
in our previous work on the melting crystal models 
\cite{NT07,NT08,Takasaki13,Takasaki14}. 
A clue of these techniques is the notion of 
``shift symmetries'' in a quantum torus algebra.  
This algebra is realized by operators on the Fock space 
of 2D charged free fermions\footnote{
The same fermionic realization of 
the quantum torus algebra appears in the work 
of Okounkov and Pandharipande \cite{OP03} 
on the Gromov-Witten invariants of $\CC\PP^1$.}.  
The shift symmetries act on a set of basis elements $V^{(k)}_m$ 
of this algebra so as to shift the indices $k,m$ 
in a certain way.  This enables us to relate 
the commutative subalgebra spanned by $V^{(k)}_0$'s\footnote{
Its role as symmetries in the KP and 2D Toda hierarchies 
was independently studied by Harnad and Orlov \cite{HO09}.}
to the $U(1)$ current algebra spanned by $V^{(0)}_m$'s.  
In our previous work, this algebraic machinery is used 
to convert the partition functions of the melting crystal models 
to tau functions of the 2D Toda hierarchy. 
In this paper, we employ the same method to express 
the open string amplitudes as matrix elements, 
modified by simple prefactors, of an operator product 
on the fermionic Fock space.  

Our calculation starts from a {\it cut-and-glue} description 
of the amplitude \cite{Sulkowski06}.  Namely, 
the web diagram is cut into two subdiagrams by removing 
an internal line, and glued together along this line 
after calculating the contributions of these two parts. 
One of them is a single topological vertex, 
and the other is an on-strip diagram for which 
the result of Iqbal and Kashani-Poor \cite{IKP04} 
can be used. To glue these two parts again, 
we have to calculate an infinite sum with respect 
to a partition on the internal line. This is the place 
where the aforementioned techniques are used.  
The amplitude of the closed topological vertex thereby 
boils down to a product of simple factors and 
a matrix element of an operator on the Fock space.  
Moreover, the matrix element turns out to be 
the open string amplitude of a new on-strip web diagram. 

The final expression of the open string amplitudes 
enables us to derive $q$-difference equations 
for the generating functions of special subsets 
of the amplitudes.  The generating functions 
are the Baker-Akhiezer functions 
in the context of integrable hierarchies, 
and play the role of ``wave functions''  
of a probe D-brane \cite{ADKMV03,DV07}. 
Our result is an extension of known results 
on the resolved conifold \cite{KP06,HY06,Zhou12} 
and more general on-strip geometry \cite{Takasaki13bis}. 
The structure of the $q$-difference equation is, 
so to speak, a mixture of the $q$-difference equations 
of the quantum dilogarithmic functions \cite{FV93,FK93} 
and the $q$-hypergeometric equations that appear 
in the resolved conifold and more general on-strip geometry.  
Our result shows that quantum mirror curves beyond 
on-strip geometry can have an intricate origin.  

This paper is organized as follows.  In Section 2, 
the diagrammatic construction of the open string amplitudes 
are reformulated in a partially summed form.  Fermionic tools 
for the subsequent calculation are also reviewed here.  
In Section 3, the techniques borrowed from the melting crystal models 
are used to calculate the amplitudes in terms of fermions. 
In Section 4, the fermionic expression of the amplitudes 
is further converted to a final form.  A technical clue therein 
is the cyclic symmetry among ``two-leg'' topological vertices. 
This well known symmetry is translated to a kind of 
``operator-state correspondence'' in the fermionic Fock space, 
and used to rewrite the fermionic expression of the amplitudes. 
In Section 5, the generating functions of special subsets 
of the amplitudes are introduced, and shown 
to satisfy $q$-difference equations.  The structure 
of the $q$-difference equations is examined 
in the perspectives of mirror geometry.  
In Section 6, these results are shown to be consistent 
with a flop transition.  Appendix A is a brief review 
of the notion of on-strip amplitudes.  Appendix B presents 
another proof of the identities used in Section 4.

\section{Construction of open string amplitudes}

The setup for the open string amplitudes in question is shown 
in Figure \ref{fig2}.  $Q_1,Q_2,Q_3$ are K\"ahler parameters 
on the internal lines.  $\beta_1$ and $\beta_2$ are partitions 
assigned to the two lower external lines.  
The other external lines are given the trivial partition $\emptyset$.  
Let $Z^{\ctv}_{\beta_1\beta_2}$ denote the amplitude in this setup.  
$Z^{\ctv}_{\beta_1\beta_2}$ is a sum of weights over all possible values 
of the partitions $\alpha_1,\alpha_2,\alpha_3$ on the internal lines.  
The weight for a given configuration of $\alpha_1,\alpha_2,\alpha_3$ 
is a product of vertex weights and edge weights.  
These weights depend on the parameter $q$ in the range $|q|<1$.  

\begin{figure}
\centering
\includegraphics[scale=0.8]{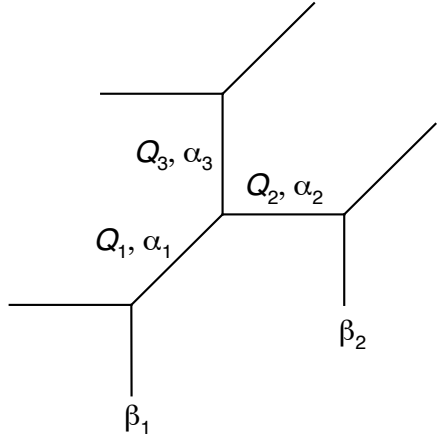}
\caption{Setup for open string amplitude $Z^{\ctv}_{\beta_1\beta_2}$}
\label{fig2}
\end{figure}

\subsection{Vertex weights and gluing rules}

The vertex weight at each vertex is 
the topological vertex\footnote{
We follow a definition commonly used 
in the recent literature \cite{IKV07,Taki07}.  
This definition differs from the earlier one 
\cite{AKMV03,ADKMV03} in that $q$ is replaced by $q^{-1}$ 
and an overall factor of the form 
$q^{\kappa(\lambda)/2+\kappa(\mu)/2+\kappa(\nu)/2}$ 
is multiplied.}
\beq
  C_{\lambda\mu\nu} = q^{\kappa(\mu)/2}s_{\tp{\nu}}(q^{-\rho})
    \sum_{\eta\in\calP}s_{\tp{\lambda}/\eta}(q^{-\nu-\rho})
      s_{\mu/\eta}(q^{-\tp{\nu}-\rho}), 
  \label{Clmn}
\eeq
where the sum with respect to $\eta$ ranges over the set $\calP$ 
of all partitions.  $\lambda = (\lambda_i)_{i=1}^\infty$, 
$\mu = (\mu_i)_{i=1}^\infty$ and $\nu = (\nu_i)_{i=1}^\infty$ 
are the partitions assigned to the three legs 
of the vertex that are ordered anti-clockwise, 
and $\tp{\nu}$ denotes the conjugate (or transposed) partition of $\nu$.  
$\kappa(\mu)$ is the second Casimir invariant 
\beqnn
  \kappa(\mu) = \sum_{i=1}^\infty \mu_i(\mu_i-2i+1) 
  = \sum_{i=1}^\infty\left((\mu_i-i+1/2)^2 - (-i+1/2)^2\right). 
\eeqnn
$s_{\tp{\nu}}(q^{-\rho})$, $s_{\tp{\lambda}/\eta}(q^{-\nu-\rho})$ and 
$s_{\mu/\eta}(q^{-\tp{\nu}-\rho})$ are special values 
of the infinite-variate Schur function $s_{\tp{\nu}}(\bsx)$ 
and the skew Schur functions $s_{\tp{\lambda}/\eta}(\bsx),\,s_{\mu/\eta}(\bsx)$, 
$\bsx = (x_1,x_2,\ldots)$, at 
\beqnn
  q^{-\rho} = (q^{i-1/2})_{i=1}^\infty,\quad 
  q^{-\nu-\rho} = (q^{-\nu_i+i-1/2})_{i=1}^\infty, \quad 
  q^{-\tp{\nu}-\rho} = (q^{-\tp{\nu}_i+i-1/2})_{i=1}^\infty. 
\eeqnn
The vertex weight enjoy the cyclic symmetry 
\beq
  C_{\lambda\mu\nu} = C_{\mu\nu\lambda} = C_{\nu\lambda\mu} 
  \label{CS}
\eeq
that can be deduced from the crystal interpretation 
of the vertex weight \cite{ORV03}.  

The vertex weights $C_{\lambda\mu\nu}$ and $C_{\lambda'\mu'\nu'}$ 
at two vertices connecting an internal line are glued together 
by the following rules: 
\begin{itemize}
\item[(i)] The partitions on the internal line,  
say $\lambda$ and $\lambda'$, are matched as 
\beqnn
  \lambda' = \tp{\lambda}. 
\eeqnn
\item[(ii)] The product of the vertex weights 
is multiplied by the edge weight 
\beqnn
  (-Q)^{|\lambda|}(-1)^{n|\lambda|}q^{-n\kappa(\lambda)/2}, 
\eeqnn
where $Q$ is the K\"ahler parameter of the internal line, 
and $n$ is an integer called ``the framing number''. 
\end{itemize}

The framing number is defined as 
\beq
  n = v'\wedge v = w'\wedge w, 
  \label{f-number}
\eeq
where $v,w$ and $v',w'$ are vectors in the web diagram 
that emanate from the two vertices (see Figure \ref{fig3}). 
The wedge product means the determinant 
of the $2\times 2$ matrix formed by the two vectors, 
i.e., 
\beqnn
  v'\wedge  v = v'_1v_2 - v'_2v_1 
\eeqnn
for $v' = (v'_1,v'_2)$ and $v = (v_1,v_2)$. 
These vectors $v,w$ and $v',w'$ are chosen along 
with the third vectors $u,u'$, $u + u' = 0$, 
in such a way that $u,v,w$ and $u',v',w'$ are ordered 
anti-clockwise and satisfy the zero-sum relations 
\beqnn
  u + v + w = 0,\quad u' + v' + w' = 0. 
\eeqnn
These sets of vectors are uniquely determined 
as far as the toric diagram is fully triangulated 
(i.e., the area of each triangle is $1/2$).  

\begin{figure}
\centering
\includegraphics[scale=0.7]{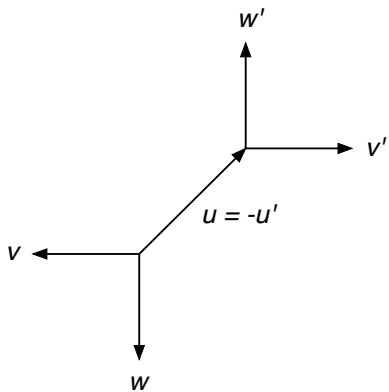}
\caption{Configuration of vectors that determine the framing number}
\label{fig3}
\end{figure}

\subsection{Reformulation of amplitude}

The amplitude $Z^{\ctv}_{\beta_1\beta_2}$ is given by a sum 
of the product of these weights 
over $\alpha_1,\alpha_2,\alpha_3 \in \calP$.  
Following Su{\l}kowski's formulation \cite{Sulkowski06}, 
we decompose this sum to a partial with 
sum respect to $\alpha_1,\alpha_2$ at the first stage 
and a sum with respect to $\alpha_3$ at the next stage.  
The full amplitude can be thus reformulated as 
\beq
  Z^{\ctv}_{\beta_1\beta_2} 
  = \sum_{\alpha_3\in\calP} Z_{\beta_1\beta_2|\alpha_3} 
    (-Q_3)^{|\alpha_3|}C_{\tp{\alpha}_3\emptyset\emptyset}. 
  \label{Zb1b2-fac}
\eeq
$Z_{\beta_1\beta_2|\alpha_3}$ is the partial sum with respect 
to $\alpha_1,\alpha_2$ and represents the contribution 
from the lower part of the web diagram.  This part is glued 
with the upper part via the internal line carrying $\alpha_3$.  
$(-Q_3)^{|\alpha_3|}$ is the edge weight of this internal line. 
Note that the framing number (\ref{f-number}) 
in this case is equal to $0$.  
$C_{\tp{\alpha}_3\emptyset\emptyset}$ is the contribution 
from the upper part of the web diagram.  
By the cyclic symmetry (\ref{CS}), 
this vertex weight reduces to a special value 
of the Schur function: 
\beq
  C_{\tp{\alpha}_3\emptyset\emptyset} 
  = C_{\emptyset\emptyset\tp{\alpha}_3} 
  = s_{\alpha_3}(q^{-\rho}). 
  \label{Ca3}
\eeq

The partial sum $Z_{\beta_1\beta_2|\alpha_3}$ itself 
may be thought of as an open string amplitude 
of the web diagram (called ``double-$\PP^1$'') 
shown in Figure \ref{fig4}. 
Since this is a diagram ``on a strip'', 
the associated open string amplitude can be calculated 
by the well known result \cite{IKP04} (see Appendix A): 
\begin{align}
  Z_{\beta_1\beta_2|\alpha_3} 
  &= s_{\tp{\beta}_1}(q^{-\rho})s_{\tp{\beta}_2}(q^{-\rho})s_{\tp{\alpha}_3}(q^{-\rho})
     \prod_{i,j=1}^\infty(1 - Q_1Q_2q^{-\beta_{1i}-\tp{\beta}_{2j}+i+j-1})^{-1}
     \notag\\
  &\quad\mbox{}\times 
     \prod_{i,j=1}^\infty(1 - Q_1q^{-\beta_{1i}-\alpha_{3j}+i+j-1})
     \prod_{i,j=1}^\infty(1 - Q_2q^{-\tp{\alpha}_{3i}-\tp{\beta}_{2j}+i+j-1}). 
  \label{Zb1b2a3}
\end{align}

\begin{figure}
\centering
\includegraphics[scale=0.8]{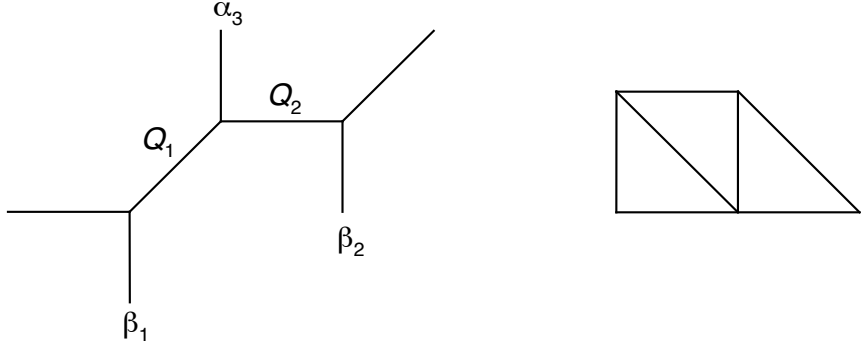}
\caption{Double-$\PP^1$ diagram defining $Z_{\beta_1\beta_2|\alpha_3}$}
\label{fig4}
\end{figure}

Plugging these building blocks into (\ref{Zb1b2-fac}), 
we obtain the following expression of $Z^{\ctv}_{\beta_1\beta_2}$: 
\begin{align}
  Z^{\ctv}_{\beta_1\beta_2} 
  &= s_{\tp{\beta}_1}(q^{-\rho})s_{\tp{\beta_2}}(q^{-\rho})
     \prod_{i,j=1}^\infty(1 - Q_1Q_2q^{-\beta_{1i}-\tp{\beta}_{2j}+i+j-1})^{-1}
     \notag\\
  &\quad\mbox{}\times \sum_{\alpha_3\in\calP}
     s_{\tp{\alpha}_3}(q^{-\rho})s_{\alpha_3}(q^{-\rho})(-Q_3)^{|\alpha_3|} 
     \prod_{i,j=1}^\infty(1 - Q_1q^{-\beta_{1i}-\alpha_{3j}+i+j-1})
     \notag\\
  &\quad\quad\quad\mbox{}\times 
        \prod_{i,j=1}^\infty(1 - Q_2q^{-\tp{\alpha}_{3i}-\tp{\beta}_{2j}+i+j-1}). 
  \label{Zb1b2-red}
\end{align}
Note here that the sum with respect to $\alpha_3$ resembles 
the partition function of the modified melting model 
\cite{Takasaki13,Takasaki14}:  The main part of 
the Boltzmann weight therein takes 
the product form $s_{\tp{\alpha}_3}(q^{-\rho})s_{\alpha_3}(q^{-\rho})$, 
and this weight is deformed by external potentials depending 
on $\alpha_3$.  To calculate this sum, we use the machinery 
of 2D charged free fermions.

\subsection{Fermionic Fock space and operators} 

The setup of the fermionic Fock space and operators 
is the same as used for the melting crystal models 
\cite{NT07,NT08,Takasaki13,Takasaki14}. 
Let $\psi_n,\psi^*_n$, $n\in\ZZ$, denote the Fourier modes 
of the 2D charged free fermion fields $\psi(z),\psi^*(z)$. 
They satisfy the anti-commutation relations 
\beqnn
  \psi_m\psi^*_n + \psi^*_n\psi_m = \delta_{m+n,0}, \quad
  \psi_m\psi_n + \psi_n\psi_m = 0, \quad 
   \psi^*_m\psi^*_n + \psi^*_n\psi^*_m = 0. 
\eeqnn
The associated Fock space and its dual space 
are decomposed to the charge-$s$ sectors for $s \in \ZZ$.  
It is only the charge-$0$ sector that is relevant to 
the calculation of (\ref{Zb1b2-red}).  An orthonormal basis 
of the charge-$0$ sector is given the ground states 
\beqnn
\begin{aligned}
  \langle 0| &= \langle -\infty|\cdots\psi^*_{-i+1}\cdots\psi^*_{-1}\psi^*_0,\\
  |0\rangle &= \psi_0\psi_1\cdots\psi_{i-1}\cdots|-\infty\rangle 
\end{aligned}
\eeqnn
and the excited states 
\beqnn
\begin{aligned}
  \langle \lambda| &= \langle -\infty|\cdots\psi^*_{\lambda_i-i+1}
    \cdots\psi^*_{\lambda_2-1}\psi^*_{\lambda_1},\\
  |\lambda\rangle &= \psi_{-\lambda_1}\psi_{-\lambda_2+1}\cdots
    \psi_{-\lambda_i+i-1}\cdots|-\infty\rangle 
\end{aligned}
\eeqnn
labelled by partitions.  The normal ordered product 
${:}\psi_m\psi^*_n{:}$ is defined as 
\beqnn
  {:}\psi_m\psi^*_n{:} 
  = \psi_m\psi^*_n - \langle 0|\psi_m\psi^*_n|0\rangle. 
\eeqnn

The following operators on the Fock space are used 
as fundamental tools in our calculation. 
\begin{itemize}
\item[(i)] The zero-modes 
\beqnn
  L_0 = \sum_{n\in\ZZ}n{:}\psi_{-n}\psi^*_n{:},\quad
  W_0 = \sum_{n\in\ZZ}n^2{:}\psi_{-n}\psi^*_n{:} 
\eeqnn
of the Virasoro and $W_3$ algebras and the Fourier modes 
\beqnn
  J_m = \sum_{n\in\ZZ}{:}\psi_{-n}\psi^*_{n+m}{:},\quad m\in\ZZ, 
\eeqnn
of the fermionic current ${:}\psi(z)\psi^*(z){:}$. 
\item[(ii)] The fermionic realization 
\beqnn
  K = \sum_{n\in\ZZ}(n-1/2)^2{:}\psi_{-n}\psi^*_n{:} 
    = W_0 - L_0 + J_0/4 
\eeqnn
of the so called ``cut-and-join operator'' \cite{GJ97,KL06}. 
\item[(iii)] The basis elements 
\beqnn
  V^{(k)}_m 
  = q^{-km/2}\sum_{n\in\ZZ}q^{kn}{:}\psi_{m-n}\psi^*_{n}{:},\quad 
  k,m\in\ZZ, 
\eeqnn
of a fermionic realization of the quantum torus algebra 
\cite{NT07,OP03}. 
\item[(iv)] The vertex operators \cite{OR01,BY08} 
\beqnn
  \Gamma_{\pm}(z) 
  = \exp\left(\sum_{k=1}^\infty\frac{z^k}{k}J_{\pm k}\right),\quad
  \Gamma'_{\pm}(z) 
  = \exp\left(- \sum_{k=1}^\infty\frac{(-z)^k}{k}J_{\pm k}\right) 
\eeqnn
and the multi-variable extensions 
\beqnn
  \Gamma_{\pm}(\bsx) = \prod_{i\ge 1}\Gamma_{\pm}(x_i),\quad 
  \Gamma'_{\pm}(\bsx) = \prod_{i\ge 1}\Gamma'_{\pm}(x_i). 
\eeqnn
\end{itemize}

The matrix elements of these operators are well known. 
$J_0,L_0,W_0,K$ are diagonal with respect to 
the basis $\{|\lambda\rangle\}_{\lambda\in\calP}$ 
in the charge-$0$ sector: 
\begin{gather}
  \langle\lambda|J_0|\mu\rangle = 0,\quad
  \langle\lambda|L_0|\mu\rangle = \delta_{\lambda\mu}|\lambda|,\notag\\
  \langle\lambda|W_0|\mu\rangle 
    = \delta_{\lambda\mu}\left(\kappa(\lambda) + |\lambda|\right),\quad
  \langle\lambda|K|\mu\rangle 
    = \delta_{\lambda\mu}\kappa(\lambda). 
\end{gather}
The matrix elements of $\Gamma_{\pm}(\bsx)$ and $\Gamma'_{\pm}(\bsx)$ 
are skew Schur functions \cite{Mac-book,MJD-book}: 
\begin{gather}
  \langle\lambda|\Gamma_{-}(\bsx)|\mu\rangle
  = \langle\mu|\Gamma_{+}(\bsx)|\lambda\rangle
  = s_{\lambda/\mu}(\bsx), \notag\\
  \langle\lambda|\Gamma'_{-}(\bsx)|\mu\rangle 
  = \langle\mu|\Gamma'_{+}(\bsx)|\lambda\rangle
  = s_{\tp{\lambda}/\tp{\mu}}(\bsx). 
\end{gather}

\section{Calculation of sum in (\ref{Zb1b2-red})}

Let us proceed to calculation of the sum in (\ref{Zb1b2-red}).  
This comprises two steps.  In the first step, 
we express $s_{\tp{\alpha}_3}(q^{-\alpha})$ and 
$s_{\alpha_3}(q^{-\rho})$ in a fermionic form, 
and convert the c-number factors $\prod_{i,j=1}^\infty(1 - Q_1\cdots)$ 
and $\prod_{i,j=1}^\infty(1 - Q_2\cdots)$ to operators 
inserted in the fermionic expression of the Schur functions.  
The sum with respect to $\alpha_3$ thereby 
turns into the vacuum expectation value 
of an operator product on the Fock space.  
In the second step, we use the ``shift symmetries'' 
of the quantum torus algebra 
\cite{NT07,NT08,Takasaki13,Takasaki14} 
to rewrite the vacuum expectation value further.  
This calculation is more or less parallel to the way 
the partition functions of the various melting crystal models 
are converted to tau functions of the 2D Toda hierarchy.

\subsection{Step 1: Translation to fermionic language}

The infinite products $\prod_{i,j}^\infty(1 - Q_1\cdots)$ 
and $\prod_{i,j=1}^\infty (1 - Q_2\cdots)$ 
can be re-expressed in an exponential form as 
\beqnn
  \prod_{i,j=1}^\infty(1- Q_1q^{-\beta_{1i}-\alpha_{3j}+i+j-1}) 
  = \exp\left(- \sum_{i,k=1}^\infty
      \frac{(Q_1q^{-\beta_{1i}+i})^k}{k} 
      \sum_{j=1}^\infty q^{-k(\alpha_{3j}-j+1)} \right) 
\eeqnn
and 
\beqnn
  \prod_{i,j=1}^\infty(1 - Q_2q^{-\tp{\alpha}_{3i}-\tp{\beta}_{2j}+i+j-1}) 
  = \exp\left(- \sum_{j,k=1}^\infty
       \frac{(Q_2q^{-\tp{\beta}_{2j}+j})^k}{k} 
       \sum_{i=1}^\infty q^{-k(\tp{\alpha}_{3i}-i+1)} \right). 
\eeqnn
We convert these c-number factors to operators 
inserted in the fermionic expression 
\beqnn
  s_{\tp{\alpha}_3}(q^{-\rho}) 
  = \langle 0|\Gamma'_{+}(q^{-\rho})|\alpha_3\rangle,\quad 
  s_{\alpha_3}(q^{-\rho}) 
  = \langle\alpha_3|\Gamma_{-}(q^{-\rho})|0\rangle 
\eeqnn
of the special values of the Schur functions.  

To this end, let us note that 
$\sum_{j=1}^\infty q^{-k(\alpha_{3j}-j+1)}$ and 
$\sum_{j=1}^\infty q^{-k(\tp{\alpha}_{3i}-i+1)}$ are 
related to eigenvalues of $V^{(\pm k)}_0$'s as shown below.  

\begin{lemma} For any $k > 0$ and any $\lambda\in\calP$, 
\begin{align}
  \left(V^{(-k)}_0 + \frac{1}{1-q^k}\right)|\lambda\rangle 
  &= \sum_{i=1}^\infty q^{-k(\lambda_i-i+1)}|\lambda\rangle,
    \label{V(-k)0-action}\\
  \left(V^{(k)}_0 - \frac{q^k}{1-q^k}\right)|\lambda\rangle 
  &=  - q^k \sum_{i=1}^\infty q^{-k(\tp{\lambda}_i-i+1)}|\lambda\rangle. 
    \label{V(k)0-action}
\end{align}
\end{lemma}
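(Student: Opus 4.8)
The plan is to compute the eigenvalue of $V^{(\pm k)}_0$ on the state $|\lambda\rangle$ directly from the fermionic definition, and then recognize the resulting infinite sum as a geometric-series-corrected version of the desired partition-dependent sum. Recall that $V^{(k)}_0 = \sum_{n\in\ZZ}q^{kn}{:}\psi_{-n}\psi^*_n{:}$, so it is diagonal in the basis $\{|\lambda\rangle\}$ with eigenvalue obtained by summing $q^{kn}$ over the "occupied" modes, with the normal-ordering subtraction accounting for the Dirac sea. Concretely, the state $|\lambda\rangle = \psi_{-\lambda_1}\psi_{-\lambda_2+1}\cdots\psi_{-\lambda_i+i-1}\cdots|-\infty\rangle$ differs from $|0\rangle$ by moving the fermion at level $-i+1$ (for $|0\rangle$) up to level $\lambda_i-i+1$; equivalently, in terms of the operator $\sum_n q^{kn}{:}\psi_{-n}\psi^*_n{:}$ one gets $\sum_{i=1}^\infty\left(q^{k(\lambda_i-i+1/2)}-q^{k(-i+1/2)}\right)$ as the eigenvalue, since the profile of $|\lambda\rangle$ at site $i$ sits at half-integer position $\lambda_i-i+1/2$ against the vacuum profile $-i+1/2$. (This is the standard Maya-diagram / charged-fermion bookkeeping; I would state it cleanly rather than re-derive it.)

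The key step is then to resum the vacuum contribution. For $V^{(-k)}_0$ the prefactor $q^{-k\cdot 0/2}=1$ and the eigenvalue is
\[
  \sum_{i=1}^\infty\left(q^{-k(\lambda_i-i+1/2)} - q^{-k(-i+1/2)}\right)
  = q^{k/2}\sum_{i=1}^\infty q^{-k(\lambda_i-i+1)} - q^{k/2}\sum_{i=1}^\infty q^{ki-k}.
\]
Wait—here I must be careful with the sign of the exponent in $q^{kn}$ versus the labelling; the correct normalization (matching the statement) is that the eigenvalue of $V^{(-k)}_0$ is $\sum_{i=1}^\infty q^{-k(\lambda_i-i+1)} - \sum_{i=1}^\infty q^{k(i-1)}$, and since $|q|<1$ the second (vacuum) sum is the convergent geometric series $\sum_{i=1}^\infty q^{k(i-1)} = \dfrac{1}{1-q^k}$. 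This yields exactly \eqref{V(-k)0-action}. For $V^{(k)}_0$, the index $n$ runs with the opposite sign of $q$, so the same computation applied to the conjugate partition $\tp{\lambda}$ (using that transposition flips the roles of rows and columns, i.e. reflects the Maya diagram) gives eigenvalue $-q^k\sum_{i=1}^\infty q^{-k(\tp{\lambda}_i-i+1)}$ plus a vacuum term $\dfrac{q^k}{1-q^k}$, which is \eqref{V(k)0-action}. The prefactor $q^{-km/2}$ is $1$ at $m=0$ so it plays no role here.

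The main obstacle is purely bookkeeping: getting the half-integer shifts, the signs of the exponents, and the conjugation $\lambda\leftrightarrow\tp{\lambda}$ to line up precisely with the stated identities, since $V^{(k)}_0$ and $V^{(-k)}_0$ are related by $q\leftrightarrow q^{-1}$ together with a transposition of the Young diagram, and it is easy to misplace a factor of $q^{k/2}$ or a sign. I would handle this by first recording the clean "profile" formula $\langle\lambda|\sum_n f(n){:}\psi_{-n}\psi^*_n{:}|\lambda\rangle = \sum_{i=1}^\infty\bigl(f(\lambda_i-i+1/2)-f(-i+1/2)\bigr)$ for any function $f$ (which is the standard charged-fermion identity and can be cited), then specialize to $f(n)=q^{\pm kn}$, and finally use the elementary identity $\sum_{i\ge 1}\bigl(q^{\pm k(\lambda_i-i+1/2)}\bigr)=q^{\mp k/2}\sum_{i\ge 1}q^{\mp k(\lambda_i-i+1)}$ together with $\{\lambda_i - i\}_{i\ge 1}$ versus $\{-\tp{\lambda}_j+j-1\}_{j\ge1}$ being complementary half-integer sets to pass between $\lambda$ and $\tp\lambda$. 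Once the profile formula is in hand, both \eqref{V(-k)0-action} and \eqref{V(k)0-action} follow by a one-line geometric summation.
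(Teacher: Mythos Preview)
Your approach is essentially the same as the paper's: compute the eigenvalue of $V^{(\pm k)}_0$ on $|\lambda\rangle$ as a profile sum over Maya-diagram positions, sum the vacuum subtraction as a geometric series to get the $\frac{1}{1-q^k}$ term for \eqref{V(-k)0-action}, and invoke the complementarity between the sets $\{\lambda_i-i\}$ and $\{-\tp{\lambda}_j+j-1\}$ to pass from $\lambda$ to $\tp{\lambda}$ for \eqref{V(k)0-action}. The paper makes the last step completely explicit by writing the disjoint decomposition $\{i\mid i\le n\}=\{\tp{\lambda}_i-i+1\mid i\ge 1\}\cup\{-\lambda_i+i\mid 1\le i\le n\}$ and summing $q^{-ki}$ over it, which is exactly the ``complementary sets'' identity you cite.

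One bookkeeping point to clean up: in this paper the fermion modes are integers, so the correct profile identity is $\langle\lambda|\sum_n f(n){:}\psi_{-n}\psi^*_n{:}|\lambda\rangle=\sum_{i\ge 1}\bigl(f(\lambda_i-i+1)-f(-i+1)\bigr)$, not the half-integer version you wrote; your subsequent line ``the eigenvalue of $V^{(-k)}_0$ is $\sum_{i} q^{-k(\lambda_i-i+1)} - \sum_{i} q^{k(i-1)}$'' is already the correct integer-shift statement, so just drop the intermediate half-integer formula. Also, for \eqref{V(k)0-action} the mechanism is not literally ``the same computation applied to $\tp{\lambda}$'' (the state is still $|\lambda\rangle$); it is the complementarity identity that converts the $\lambda$-indexed eigenvalue $\sum_i(q^{k(\lambda_i-i+1)}-q^{k(-i+1)})$ into the $\tp{\lambda}$-indexed sum, exactly as the paper does.
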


\begin{remark}
(\ref{V(-k)0-action}) and (\ref{V(k)0-action}) 
imply the relations 
\beqnn
\begin{aligned}
  \langle\lambda|\left(V^{(-k)}_0 + \frac{1}{1-q^k}\right)
  &= \langle\lambda|\sum_{i=1}^\infty q^{-k(\lambda_i-i+1)},\\
  \langle\lambda|\left(V^{(k)}_0 - \frac{q^k}{1-q^k}\right)
  &=  - \langle\lambda|q^k \sum_{i=1}^\infty q^{-k(\tp{\lambda}_i-i+1)}
\end{aligned}
\eeqnn
in the dual Fock space as well. 
\end{remark}

\begin{proof}
It is straightforward to derive (\ref{V(-k)0-action}): 
\beqnn
\begin{aligned}
  V^{(-k)}_0|\lambda\rangle 
  &= \sum_{j=1}^\infty(q^{-k(\lambda_j-j+1)} - q^{-k(-j+1)})|\lambda\rangle\\
  &= \left(\sum_{j=1}^\infty q^{-k(\lambda_j-j+1)} - \frac{1}{1-q^k}
    \right)|\lambda\rangle. 
\end{aligned}
\eeqnn
The subtraction term $q^{-k(-j+1)}$ in this calculation 
originates in the normal ordering 
\beqnn
  {:}\psi_{-n}\psi^*_n{:} 
  = \begin{cases}
    \psi_{-n}\psi^*_n &\text{for $n > 0$},\\
    \psi_{-n}\psi^*_n - 1 &\text{for $n\leq 0$}. 
    \end{cases}
\eeqnn
It is not straightforward to derive (\ref{V(k)0-action}). 
Let $n$ be an integer greater than or equal 
to the length of $\lambda$.  
Accordingly, $\lambda_i = i$ for $i > n$.  
Since the set of all integers $i\leq n$ 
can be divided into two disjoint sets as 
\beqnn
  \{i\mid i\leq n\} 
  = \{\tp{\lambda}_i-i+1 \mid i\geq 1\} 
    \cup \{-\lambda_i+i \mid 1\leq i\leq n\}, 
\eeqnn
one obtains the identity 
\beqnn
  \sum_{i=1}^\infty q^{-k(\tp{\lambda}_i-i+1)} 
  + \sum_{i=1}^n q^{-k(-\lambda_i+i)} 
  = \sum_{i=-\infty}^n q^{-ki} 
  = \frac{q^{-kn}}{1-q^k}, 
\eeqnn
which implies that 
\beqnn
\begin{aligned}
  \sum_{i=1}^\infty q^{-k(\tp{\lambda}_i-i+1)} 
  &= - \sum_{i=1}^n q^{-k(-\lambda_i+i)} + \frac{q^{-kn}}{1-q^k}\\
  &= - \sum_{i=1}^n (q^{-k(-\lambda_i+i)} - q^{-ki}) 
     + \frac{1}{1-q^k} \\
  &= - q^{-k}\sum_{i=1}^n(q^{k(\lambda_i-i+1)} - q^{k(-i+1)}) 
     + \frac{1}{1-q^k}. 
\end{aligned}
\eeqnn
Consequently, 
\beqnn
\begin{aligned}
  V^{(k)}_0|\lambda\rangle 
  &= \sum_{i=1}^n (q^{k(\lambda_i-i+1)} - q^{k(-i+1)})|\lambda\rangle\\
  &= \left(- q^k\sum_{i=1}^\infty q^{-k(\tp{\lambda}_i-i+1)} 
       + \frac{q^k}{1-q^k}\right) |\lambda\rangle. 
\end{aligned}
\eeqnn
(\ref{V(k)0-action}) can be thus derived. 
\end{proof}

By (\ref{V(-k)0-action}) and (\ref{V(k)0-action}), 
the c-number factors $\prod_{i,j=1}^\infty(1 - Q_1\cdots)$ 
and $\prod_{i,j=1}^\infty(1 - Q_2\cdots)$ can be converted 
to operators on the Fock space as 
\begin{align}
  &\prod_{i,j=1}^\infty(1 - Q_1q^{-\beta_{1i}-\alpha_{3j}+i+j-1})
   \cdot s_{\tp{\alpha}_3}(q^{-\rho}) \notag\\
  &= \langle 0|\Gamma'_{+}(q^{-\rho}) 
       \exp\left(- \sum_{i,k=1}^\infty\frac{(Q_1q^{-\beta_{1i}+i})^k}{k} 
       \left(V^{(-k)}_0 + \frac{1}{1-q^k}\right) \right) 
     |\alpha_3\rangle
  \label{a3-block1}
\end{align}
and
\begin{align}
  &\prod_{i,j=1}^\infty(1 - Q_2q^{-\tp{\alpha}_{3i}-\tp{\beta}_{2j}+i+j-1})
   \cdot s_{\alpha_3}(q^{-\rho}) \notag\\
  &= \langle\alpha_3|
       \exp\left(\sum_{j,k=1}^\infty\frac{(Q_2q^{-\tp{\beta}_{2j}+j-1})^k}{k} 
       \left(V^{(k)}_0 - \frac{q^k}{1-q^k}\right) \right)
     \Gamma_{-}(q^{-\rho})|0\rangle. 
  \label{a3-block2}
\end{align}
Moreover, the factor $(-Q_3)^{|\alpha_3|}$ can be identified 
with the diagonal matrix element of $(-Q_3)^{L_0}$. 
Having derived these building blocks, we can now use 
the partition of unity 
\beqnn
  \sum_{\alpha_3\in\calP}|\alpha_3\rangle\langle\alpha_3| 
  = 1 
\eeqnn
in the charge-$0$ sector to rewrite the sum 
in (\ref{Zb1b2-red}) to the vacuum expectation value 
of an operator product: 
\begin{align}
  &\sum_{\alpha_3\in\calP}s_{\tp{\alpha_3}}(q^{-\rho})s_{\alpha_3}(q^{-\rho})
   (-Q_3)^{|\alpha_3|}\prod_{i,j=1}^\infty(1 - Q_1\cdots)
   \prod_{i,j=1}^\infty(1 - Q_2\cdots) \notag\\
  &= \langle 0|\Gamma'_{+}(q^{-\rho}) 
       \exp\left(- \sum_{i,k=1}^\infty\frac{(Q_1q^{-\beta_{1i}+i})^k}{k} 
       \left(V^{(-k)}_0 + \frac{1}{1-q^k}\right) \right)
       (-Q_3)^{L_0}  \notag\\
  &\quad\mbox{}\times 
       \exp\left(\sum_{j,k=1}^\infty\frac{(Q_2q^{-\tp{\beta}_{2j}+j-1})^k}{k} 
       \left(V^{(k)}_0 - \frac{q^k}{1-q^k}\right) \right)
     \Gamma_{-}(q^{-\rho})|0\rangle. 
  \label{a3sum=vev1}
\end{align}

\subsection{Step 2: Use of shift symmetries}

Let us recall the following consequence of 
the shift symmetries of the quantum torus algebra 
\cite{NT07,NT08,Takasaki13}.  Note that the last one 
(\ref{SS3}) is modified from the previous formulation 
in terms of $W_0$. 

\begin{lemma} 
\begin{gather}
  \Gamma'_{-}(q^{-\rho})\Gamma'_{+}(q^{-\rho})
  \left(V^{(-k)}_0 + \frac{1}{1-q^k}\right) 
    = V^{(-k)}_k\Gamma'_{-}(q^{-\rho})\Gamma'_{+}(q^{-\rho}),
  \label{SS1}\\
  \left(V^{(k)}_0 - \frac{q^k}{1-q^k}\right) 
  \Gamma_{-}(q^{-\rho})\Gamma_{+}(q^{-\rho})
    =   \Gamma_{-}(q^{-\rho})\Gamma_{+}(q^{-\rho})(-1)^kV^{(k)}_{-k},
  \label{SS2}\\
  V^{(-k)}_k = q^{-k/2}q^{K/2}J_kq^{-K/2},\quad 
  V^{(k)}_{-k} = q^{k/2}q^{K/2}J_{-k}q^{-K/2}. 
  \label{SS3}
\end{gather}
\end{lemma}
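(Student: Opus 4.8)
The three equalities call for rather different arguments, so I would treat them separately. Equations (\ref{SS1}) and (\ref{SS2}) are the shift symmetries of the quantum torus algebra, and I would recall them from \cite{NT07,NT08,Takasaki13} rather than reprove them; they are dual to one another under the transpose symmetry of the construction (the involution realizing $\lambda\mapsto\tp{\lambda}$, which exchanges $\Gamma_{\pm}(q^{-\rho})$ with $\Gamma'_{\pm}(q^{-\rho})$), and the sign $(-1)^k$ in (\ref{SS2}) is a by-product of that involution. The genuinely new statement is (\ref{SS3}), which replaces the earlier identities $V^{(-k)}_k=q^{W_0/2}J_kq^{-W_0/2}$ and $V^{(k)}_{-k}=q^{W_0/2}J_{-k}q^{-W_0/2}$ of \cite{NT07} by their $q^{K/2}$-conjugated versions; this I would prove from scratch and then reconcile with the $W_0$-form.

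Both halves of (\ref{SS3}) are the single assertion $V^{(-m)}_m=q^{-m/2}q^{K/2}J_mq^{-K/2}$ for $m\ne 0$ (take $m=k$ and $m=-k$). From $K=\sum_{n\in\ZZ}(n-1/2)^2{:}\psi_{-n}\psi^*_n{:}$ and the elementary commutators $[{:}\psi_{-n}\psi^*_n{:},\psi_j]=\delta_{n,-j}\psi_j$, $[{:}\psi_{-n}\psi^*_n{:},\psi^*_j]=-\delta_{n,j}\psi^*_j$, one finds $[K,\psi_j]=(j+1/2)^2\psi_j$ and $[K,\psi^*_j]=-(j-1/2)^2\psi^*_j$, hence $q^{K/2}\psi_jq^{-K/2}=q^{(j+1/2)^2/2}\psi_j$ and $q^{K/2}\psi^*_jq^{-K/2}=q^{-(j-1/2)^2/2}\psi^*_j$. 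Since $\langle0|\psi_a\psi^*_b|0\rangle$ vanishes unless $a+b=0$, the normal-ordering constant in $J_m=\sum_n{:}\psi_{m-n}\psi^*_n{:}$ is absent for $m\ne 0$, so one may conjugate term by term; using the elementary identity $(m-n+1/2)^2-(n-1/2)^2=m(m-2n+1)$, whose half is $\tfrac{m(m+1)}{2}-mn$, one obtains
\[
  q^{K/2}J_mq^{-K/2}=q^{m(m+1)/2}\sum_{n\in\ZZ}q^{-mn}{:}\psi_{m-n}\psi^*_n{:}=q^{m/2}V^{(-m)}_m,
\]
the last step using $V^{(-m)}_m=q^{m^2/2}\sum_n q^{-mn}{:}\psi_{m-n}\psi^*_n{:}$. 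To reconcile this with the earlier $W_0$-form, use $K=W_0-L_0+J_0/4$ together with $J_0=0$ on the charge-$0$ sector and the one-line relation $[L_0,J_m]=-mJ_m$, which gives $q^{K/2}J_{\pm k}q^{-K/2}=q^{\pm k/2}q^{W_0/2}J_{\pm k}q^{-W_0/2}$; combined with (\ref{SS3}) this returns $V^{(-k)}_k=q^{W_0/2}J_kq^{-W_0/2}$ and $V^{(k)}_{-k}=q^{W_0/2}J_{-k}q^{-W_0/2}$, the formulas of \cite{NT07}.

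I do not anticipate any real difficulty in (\ref{SS3}): it is a bilinear computation whose only subtlety is the vanishing of the normal-ordering constant for $m\ne 0$. The actual obstacle is a genuinely self-contained proof of the shift symmetries (\ref{SS1})--(\ref{SS2}). The mechanism is that, via the geometric sum $\sum_{i\ge 1}q^{k(i-1/2)}=q^{k/2}/(1-q^k)$, one has $\Gamma'_{\pm}(q^{-\rho})=\exp\bigl(\sum_{k\ge 1}\tfrac{(-1)^{k+1}q^{k/2}}{k(1-q^k)}J_{\pm k}\bigr)$, and the conjugation of $V^{(-k)}_0$ by $\Gamma'_-(q^{-\rho})\Gamma'_+(q^{-\rho})$ is evaluated through the quantum-torus commutators $[J_\ell,V^{(-k)}_m]=(q^{-k\ell/2}-q^{k\ell/2})V^{(-k)}_{m+\ell}$ for $m+\ell\ne 0$, with a Schwinger term responsible for the additive constant $1/(1-q^k)$ when $m+\ell=0$; one then has to resum the resulting nested commutators, all series converging because $|q|<1$. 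Making that resummation fully explicit is the point at which I would instead lean on \cite{NT07,NT08,Takasaki13}.
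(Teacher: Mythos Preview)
Your approach matches the paper's: the paper offers no proof of this lemma at all, merely stating it as ``a consequence of the shift symmetries of the quantum torus algebra'' with a citation to \cite{NT07,NT08,Takasaki13} and the remark that (\ref{SS3}) ``is modified from the previous formulation in terms of $W_0$.'' You do the same for (\ref{SS1})--(\ref{SS2}), and your explicit derivation of (\ref{SS3}) together with its reconciliation to the $W_0$-form is correct and in fact goes beyond what the paper provides.
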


We use these operator identities to rewrite 
(\ref{a3sum=vev1}) further.  Let us first examine 
the left side of $(-Q_3)^{L_3}$ in (\ref{a3sum=vev1}).  
Upon inserting $q^{-K/2}\Gamma'_{-}(q^{-\rho})$ 
to the right of $\langle 0|$ as 
\beqnn
  \langle 0|\Gamma'_{+}(q^{-\rho}) 
  = \langle 0|q^{-K/2}\Gamma'_{-}(q^{-\rho})\Gamma'_{+}(q^{-\rho}), 
\eeqnn
we can use (\ref{SS1}) and (\ref{SS3}) to rewrite 
the left side of $(-Q_3)^{L_3}$ as 
\beqnn
\begin{aligned}
  &\langle 0|\Gamma'_{+}(q^{-\rho}) 
   \exp\left(- \sum_{i,k=1}^\infty\frac{(Q_1q^{-\beta_{1i}+i})^k}{k} 
   \left(V^{(-k)}_0 + \frac{1}{1-q^k}\right) \right)\\
  &= \langle 0|q^{-K/2}\exp\left(- \sum_{i,k=1}^\infty
     \frac{(Q_1q^{-\beta_{1i}+i})^k}{k} V^{(-k)}_k\right)
     \Gamma'_{-}(q^{-\rho})\Gamma'_{+}(q^{-\rho})\\
  &= \langle 0|\exp\left(- \sum_{i,k=1}^\infty
     \frac{(Q_1q^{-\beta_{1i}+i})^k}{k}q^{-k/2}J_k\right)
     q^{-K/2}\Gamma'_{-}(q^{-\rho})\Gamma'_{+}(q^{-\rho})\\
  &= \langle 0|\exp\left(- \sum_{i,k=1}^\infty
     \frac{(Q_1q^{-\beta_{1i}+i-1/2})^k}{k}J_k\right)
     q^{-K/2}\Gamma'_{-}(q^{-\rho})\Gamma'_{+}(q^{-\rho}). 
\end{aligned}
\eeqnn
The exponential operator in the last line is essentially 
a vertex operator, 
\beqnn
  \exp\left(- \sum_{i,k=1}^\infty
  \frac{(Q_1q^{-\beta_{1i}+i-1/2})^k}{k}J_k\right)
  = (-Q_1)^{-L_0}\Gamma'_{+}(q^{-\beta_1-\rho})(-Q_1)^{L_0}, 
\eeqnn
hence 
\begin{align}
  &\langle 0|\Gamma'_{+}(q^{-\rho}) 
   \exp\left(- \sum_{i,k=1}^\infty\frac{(Q_1q^{-\beta_{1i}+i})^k}{k} 
   \left(V^{(-k)}_0 + \frac{1}{1-q^k}\right) \right) \notag\\
  &= \langle 0|\Gamma'_{+}(q^{-\beta_1-\rho})(-Q_1)^{L_0}
     q^{-K/2}\Gamma'_{-}(q^{-\rho})\Gamma'_{+}(q^{-\rho}). 
  \label{vev1-left}
\end{align}

In exactly the same manner, using (\ref{SS2}) and (\ref{SS3}), 
we can rewrite the right side of $(-Q_3)^{L_0}$ as 
\begin{align}
  &\exp\left(\sum_{j,k=1}^\infty\frac{(Q_2q^{-\tp{\beta}_{2j}+j-1})^k}{k} 
   \left(V^{(k)}_0 - \frac{q^k}{1-q^k}\right) \right)
   \Gamma_{-}(q^{-\rho})|0\rangle \notag\\
  &= \Gamma_{-}(q^{-\rho})\Gamma_{+}(q^{-\rho})q^{K/2}
     (-Q_2)^{L_0}\Gamma_{-}(q^{-\tp{\beta}_2-\rho})|0\rangle. 
  \label{vev1-right}
\end{align}

Plugging (\ref{vev1-left}) and (\ref{vev1-right}) 
into (\ref{a3sum=vev1}) yields the following expression 
of the sum in (\ref{Zb1b2-red}): 
\begin{align}
  &\sum_{\alpha_3\in\calP}s_{\tp{\alpha}_3}(q^{-\rho})s_{\alpha_3}(q^{-\rho})
   (-Q_3)^{|\alpha_3|}\prod_{i,j=1}^\infty(1 - Q_1\cdots)
   \prod_{i,j=1}^\infty(1 - Q_2\cdots) \notag\\
  &= \langle 0|\Gamma'_{+}(q^{-\beta_1-\rho})(-Q_1)^{L_0}
     q^{-K/2}\Gamma'_{-}(q^{-\rho})\Gamma'_{+}(q^{-\rho})(-Q_3)^{L_0} \notag\\
  &\quad\mbox{}\times
     \Gamma_{-}(q^{-\rho})\Gamma_{+}(q^{-\rho})q^{K/2}
     (-Q_2)^{L_0}\Gamma_{-}(q^{-\tp{\beta}_2-\rho})|0\rangle. 
  \label{a3sum=vev2}
\end{align}

\section{Final expression of open string amplitudes}

We have thus derived the following intermediate expression 
of $Z^{\ctv}_{\beta_1\beta_2}$: 
\begin{align}
  Z^{\ctv}_{\beta_1\beta_2} 
  &= s_{\tp{\beta}_1}(q^{-\rho})s_{\tp{\beta}_2}(q^{-\rho})
     \prod_{i,j=1}^\infty(1 - Q_1Q_2q^{-\beta_{1i}-\tp{\beta}_{2j}+i+j-1})^{-1}
     \notag\\
  &\quad\mbox{}\times 
     \langle 0|\Gamma'_{+}(q^{-\beta_1-\rho})(-Q_1)^{L_0}
     q^{-K/2}\Gamma'_{-}(q^{-\rho})\Gamma'_{+}(q^{-\rho})(-Q_3)^{L_0} \notag\\
  &\quad\mbox{}\times
     \Gamma_{-}(q^{-\rho})\Gamma_{+}(q^{-\rho})q^{K/2}
     (-Q_2)^{L_0}\Gamma_{-}(q^{-\tp{\beta}_2-\rho})|0\rangle. 
   \label{Zb1b2-med}
\end{align}
As a final step, we use the following relations 
in the fermionic Fock space that can be derived 
from a special case of the cyclic symmetry (\ref{CS}). 
This is a kind of {\it operator-state correspondence\/} 
that maps vertex operators 
of the form $\Gamma_{\pm}(q^{-\tp{\lambda}-\rho})$ 
and $\Gamma'_{\pm}(q^{-\lambda-\rho})$ 
to the state vectors $\langle\tp{\lambda}|$ 
and $|\tp{\lambda}\rangle$ in the Fock space. 

\begin{lemma}
For any $\lambda\in\calP$, 
\begin{align}
  s_{\tp{\lambda}}(q^{-\rho})\Gamma'_{-}(q^{-\lambda-\rho})|0\rangle 
    &= q^{K/2}\Gamma_{-}(q^{-\rho})\Gamma_{+}(q^{-\rho})|\tp{\lambda}\rangle,
  \label{2leg-CS1}\\
  s_{\tp{\lambda}}(q^{-\rho})\Gamma_{-}(q^{-\tp{\lambda}-\rho})|0\rangle 
    &= q^{\kappa(\lambda)/2}q^{-K/2}\Gamma'_{-}(q^{-\rho})\Gamma'_{+}(q^{-\rho})
       |\tp{\lambda}\rangle.
  \label{2leg-CS2}
\end{align}
\end{lemma}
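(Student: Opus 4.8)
The plan is to reduce both identities to special cases of the cyclic symmetry (\ref{CS}) by testing them against the dual basis $\{\langle\mu|\}_{\mu\in\calP}$ of the charge-$0$ sector. Since $\{|\mu\rangle\}_{\mu\in\calP}$ is an orthonormal basis of that sector and both sides of (\ref{2leg-CS1}) and (\ref{2leg-CS2}) lie in (the appropriate completion of) it, it suffices to show that the pairings with $\langle\mu|$ of the left- and right-hand sides coincide for every $\mu\in\calP$. The mechanism is that each of these four matrix elements is a ``two-leg'' specialization of the topological vertex, obtained from (\ref{Clmn}) by setting one of the three partitions to $\emptyset$.

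For (\ref{2leg-CS1}) I would first pair the left-hand side with $\langle\mu|$: using $\langle\mu|\Gamma'_{-}(\bsx)|0\rangle=s_{\tp\mu}(\bsx)$ this is $s_{\tp\lambda}(q^{-\rho})s_{\tp\mu}(q^{-\lambda-\rho})$, which is exactly $C_{\mu\emptyset\lambda}$ (in (\ref{Clmn}) the middle partition is $\emptyset$, so only $\eta=\emptyset$ survives). Pairing the right-hand side with $\langle\mu|$, inserting the partition of unity between $\Gamma_{-}(q^{-\rho})$ and $\Gamma_{+}(q^{-\rho})$, and using $\langle\mu|q^{K/2}=q^{\kappa(\mu)/2}\langle\mu|$ together with the skew-Schur matrix elements of $\Gamma_{\pm}$, gives $q^{\kappa(\mu)/2}\sum_{\eta}s_{\mu/\eta}(q^{-\rho})s_{\tp\lambda/\eta}(q^{-\rho})=C_{\lambda\mu\emptyset}$. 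Thus (\ref{2leg-CS1}) is the identity $C_{\mu\emptyset\lambda}=C_{\lambda\mu\emptyset}$, a case of (\ref{CS}).

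For (\ref{2leg-CS2}) the scheme is the same, with extra $q^{\kappa}$-factors to track. Pairing the left-hand side with $\langle\mu|$ via $\langle\mu|\Gamma_{-}(\bsx)|0\rangle=s_{\mu}(\bsx)$ gives $s_{\tp\lambda}(q^{-\rho})s_{\mu}(q^{-\tp\lambda-\rho})=q^{-\kappa(\mu)/2}C_{\emptyset\mu\lambda}$. For the right-hand side I would insert the partition of unity between $\Gamma'_{-}(q^{-\rho})$ and $\Gamma'_{+}(q^{-\rho})$, use $\langle\mu|q^{-K/2}=q^{-\kappa(\mu)/2}\langle\mu|$ and $\langle\mu|\Gamma'_{-}(\bsx)|\eta\rangle=s_{\tp\mu/\tp\eta}(\bsx)$, $\langle\eta|\Gamma'_{+}(\bsx)|\tp\lambda\rangle=s_{\lambda/\tp\eta}(\bsx)$, then reindex $\eta\mapsto\tp\eta$; recognizing $q^{\kappa(\lambda)/2}\sum_{\eta}s_{\tp\mu/\eta}(q^{-\rho})s_{\lambda/\eta}(q^{-\rho})=C_{\mu\lambda\emptyset}$ yields $q^{-\kappa(\mu)/2}C_{\mu\lambda\emptyset}$. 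So (\ref{2leg-CS2}) reduces to $C_{\emptyset\mu\lambda}=C_{\mu\lambda\emptyset}$, again a case of (\ref{CS}), the common factor $q^{-\kappa(\mu)/2}$ cancelling.

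I do not expect a real obstacle: the whole argument is transposition bookkeeping --- keeping straight $\lambda$ versus $\tp\lambda$ and $\eta$ versus $\tp\eta$, and correctly attributing the scalars $q^{\pm\kappa(\mu)/2}$ produced by $q^{\pm K/2}$ acting on the bra $\langle\mu|$. The one step that deserves an explicit remark is the passage from equality of all matrix elements to equality of the vectors, which is legitimate because everything stays in the charge-$0$ sector. A more computational alternative, bypassing $C_{\lambda\mu\nu}$ and using Cauchy-type identities for skew Schur functions directly, is also possible and is what Appendix B carries out; the cyclic-symmetry route above is just the shortest.
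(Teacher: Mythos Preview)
Your proposal is correct and follows essentially the same route as the paper: both arguments pair the two sides with $\langle\mu|$ for arbitrary $\mu\in\calP$, identify the resulting matrix elements with the two-leg specializations $C_{\mu\emptyset\lambda}$, $C_{\lambda\mu\emptyset}$ (resp.\ $C_{\emptyset\mu\lambda}$, $C_{\mu\lambda\emptyset}$), and then invoke the cyclic symmetry (\ref{CS}). The paper phrases this via the fermionic expression (\ref{Cabc-fermion}) and the intermediate identities (\ref{2leg-CS4})--(\ref{2leg-CS5}), whereas you compute the matrix elements directly, but the substance is identical.
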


\begin{remark}
There are a number of apparently different, but equivalent 
forms of these relations.  For example, one can use 
the well known identity \cite{Mac-book}
\beq
  s_{\tp{\lambda}}(q^{-\rho}) 
  = q^{\kappa(\lambda)/2}s_\lambda(q^{-\rho})
  \label{tp-rel}
\eeq
to rewrite (\ref{2leg-CS2}) as 
\beq
  s_{\lambda}(q^{-\rho})\Gamma_{-}(q^{-\tp{\lambda}-\rho})|0\rangle 
    = q^{-K/2}\Gamma'_{-}(q^{-\rho})\Gamma'_{+}(q^{-\rho})
      |\tp{\lambda}\rangle. 
  \label{2leg-CS3}
\eeq
(\ref{2leg-CS3}), in turn, is equivalent to (\ref{2leg-CS1}) 
(with $\lambda$ being replaced by $\tp{\lambda}$) 
as one can see from the identities 
\beqnn
\begin{aligned}
  \langle\tp{\mu}|\Gamma_{-}(q^{-\tp{\lambda}-\rho})|0\rangle 
  &= \langle\mu|\Gamma'_{-}(q^{-\tp{\lambda}-\rho})|0\rangle, \notag\\
  \langle\tp{\mu}|\Gamma'_{-}(q^{-\rho})\Gamma'_{+}(q^{-\rho})
  |\tp{\lambda}\rangle
  &= \langle\mu|\Gamma_{-}(q^{-\rho})\Gamma_{+}(q^{-\rho})|\lambda\rangle 
\end{aligned}
\eeqnn
and the fact that $\langle\lambda|$ and $|\lambda\rangle$ 
are eigenvector of $K$ with eigenvalue $\kappa(\lambda)$.  
(\ref{2leg-CS1}), (\ref{2leg-CS2}) and (\ref{2leg-CS3}) imply the relations 
\beqnn
\begin{aligned}
  s_{\tp{\lambda}}(q^{-\rho})\langle 0|\Gamma'_{+}(q^{-\lambda-\rho})
    &= \langle\tp{\lambda}|\Gamma_{-}(q^{-\rho})\Gamma_{+}(q^{-\rho})q^{K/2}\\
  s_{\tp{\lambda}}(q^{-\rho})\langle 0|\Gamma_{+}(q^{-\tp{\lambda}-\rho})
    &= q^{\kappa(\lambda)/2}\langle\tp{\lambda}|
       \Gamma'_{-}(q^{-\rho})\Gamma'_{+}(q^{-\rho})q^{-K/2},\\
  s_{\lambda}(q^{-\rho})\langle 0|\Gamma_{+}(q^{-\tp{\lambda}-\rho})
    &= \langle\tp{\lambda}|\Gamma'_{-}(q^{-\rho})\Gamma'_{+}(q^{-\rho})q^{-K/2}
\end{aligned}
\eeqnn
in the dual Fock space as well.  
\end{remark}

\begin{proof}
The topological vertex has the fermionic expression 
\begin{align}
  C_{\lambda\mu\nu} 
  &= q^{\kappa(\mu)/2}s_{\tp{\nu}}(q^{-\rho}) 
     \langle\tp{\lambda}|\Gamma_{-}(q^{-\nu-\rho})
     \Gamma_{+}(q^{-\tp{\nu}-\rho})|\mu\rangle \notag\\
  &= q^{\kappa(\mu)/2}s_{\tp{\nu}}(q^{-\rho})
     \langle\lambda|\Gamma'_{-}(q^{-\nu-\rho})
     \Gamma'_{+}(q^{-\tp{\nu}-\rho})|\tp{\mu}\rangle. 
  \label{Cabc-fermion}
\end{align}
The ``two-leg'' case $C_{\mu\emptyset\lambda} = C_{\lambda\mu\emptyset}$ 
of the cyclic symmetry (\ref{CS})\footnote{Zhou \cite{Zhou03} 
gave a direct proof of the two-leg cyclic symmetry 
without relying on the crystal interpretation 
of Okounkov, Reshetikhin and Vafa \cite{ORV03}.  
We present another direct proof in Append B that employs 
the same techniques as used in Section 3.} 
thereby turns into the relation 
\beq
  s_{\tp{\lambda}}(q^{-\rho})
  \langle 0|\Gamma'_{+}(q^{-\lambda-\rho})|\mu\rangle
  = \langle\tp{\lambda}|\Gamma_{-}(q^{-\rho})\Gamma_{+}(q^{-\rho})
    q^{K/2}|\mu\rangle. 
  \label{2leg-CS4}
\eeq
among matrix elements of operators on the Fock space. 
Since this identity holds for any $\mu$, 
one obtains (\ref{2leg-CS1}) in the dual form. 
Similarly, the symmetry relation $C_{\emptyset\mu\lambda} 
= C_{\mu\lambda\emptyset}$ yields the identity 
\beq
  s_{\tp{\lambda}}(q^{-\rho})
  \langle\mu|q^{K/2}\Gamma_{-}(q^{-\tp{\lambda}-\rho})|0\rangle 
  = \langle\mu|\Gamma'_{-}(q^{-\rho})\Gamma'_{+}(q^{-\rho})
    |\tp{\lambda}\rangle q^{\kappa(\lambda)/2}, 
  \label{2leg-CS5}
\eeq
and this implies (\ref{2leg-CS2}). 
\end{proof}

We can use the specialization 
\beqnn
\begin{aligned}
  s_{\tp{\beta}_1}(q^{-\rho})\langle 0|\Gamma'_{+}(q^{-\beta_1-\rho})
  &= \langle\tp{\beta}_1|\Gamma_{-}(q^{-\rho})\Gamma_{+}(q^{-\rho})q^{K/2},\\
  s_{\tp{\beta}_2}(q^{-\rho})\Gamma_{-}(q^{-\tp{\beta}_2-\rho})|0\rangle
  &= q^{\kappa(\beta_2)/2}q^{-K/2}
    \Gamma'_{-}(q^{-\rho})\Gamma'_{+}(q^{-\rho})|\tp{\beta}_2\rangle 
\end{aligned}
\eeqnn
of (\ref{2leg-CS1}) and (\ref{2leg-CS2}) 
to $\lambda=\beta_1$ and $\lambda=\beta_2$ 
to rewrite (\ref{Zb1b2-med}) as 
\beqnn
\begin{aligned}
  Z^{\ctv}_{\beta_1\beta_2} 
  &= q^{\kappa(\beta_2)/2}\prod_{i,j=1}^\infty
     (1 - Q_1Q_2q^{-\beta_{1i}-\tp{\beta}_{2j}+i+j-1})^{-1}\\
  &\quad\mbox{}\times \langle\tp{\beta}_1|
     \Gamma_{-}(q^{-\rho})\Gamma_{+}(q^{-\rho})q^{K/2}(-Q_1)^{L_0}q^{-K/2}
     \Gamma'_{-}(q^{-\rho})\Gamma'_{+}(q^{-\rho})(-Q_3)^{L_0}\\
  &\quad\mbox{}\times 
     \Gamma_{-}(q^{-\rho})\Gamma_{+}(q^{-\rho})q^{K/2}(-Q_2)^{L_0}q^{-K/2}
     \Gamma'_{-}(q^{-\rho})\Gamma'_{+}(q^{-\rho})|\tp{\beta}_2\rangle. 
\end{aligned}
\eeqnn
Since $q^{K/2}$'s and $q^{-K/2}$'s in this expression cancel out as 
\beq
  q^{K/2}(-Q_1)^{L_0}q^{-K/2} = (-Q_1)^{L_0},\quad 
  q^{K/2}(-Q_2)^{L_0}q^{-K/2} = (-Q_2)^{L_0},
  \label{cancellation}
\eeq
we arrive at the following final expression of $Z^{\ctv}_{\beta_1\beta_2}$. 

\begin{theorem}
The open string amplitude $Z^{\ctv}_{\beta_1\beta_2}$ can be expressed as 
\begin{align}
  Z^{\ctv}_{\beta_1\beta_2} 
  &= q^{\kappa(\beta_2)/2}\prod_{i,j=1}^\infty
     (1 - Q_1Q_2q^{-\beta_{1i}-\tp{\beta}_{2j}+i+j-1})^{-1} \notag\\
  &\quad\mbox{}\times \langle\tp{\beta}_1|
     \Gamma_{-}(q^{-\rho})\Gamma_{+}(q^{-\rho})(-Q_1)^{L_0}
     \Gamma'_{-}(q^{-\rho})\Gamma'_{+}(q^{-\rho})(-Q_3)^{L_0} \notag\\
  &\quad\mbox{}\times
     \Gamma_{-}(q^{-\rho})\Gamma_{+}(q^{-\rho})
     (-Q_2)^{L_0}\Gamma'_{-}(q^{-\rho})\Gamma'_{+}(q^{-\rho})
     |\tp{\beta}_2\rangle. 
  \label{Zb1b2-fin}
\end{align}
\end{theorem}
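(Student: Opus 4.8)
The plan is to start from the reduced expression (\ref{Zb1b2-red}) for $Z^{\ctv}_{\beta_1\beta_2}$ and to evaluate the remaining sum over $\alpha_3$ by turning it into a vacuum expectation value on the fermionic Fock space. First I would rewrite the two infinite products $\prod_{i,j}(1-Q_1\cdots)$ and $\prod_{i,j}(1-Q_2\cdots)$ in exponential form, so that the inner $j$- and $i$-sums become the eigenvalue expressions $\sum_j q^{-k(\alpha_{3j}-j+1)}$ and $\sum_i q^{-k(\tp{\alpha}_{3i}-i+1)}$ occurring in Lemma~1. Combining this with the fermionic presentation $s_{\tp{\alpha}_3}(q^{-\rho}) = \langle 0|\Gamma'_{+}(q^{-\rho})|\alpha_3\rangle$, $s_{\alpha_3}(q^{-\rho}) = \langle\alpha_3|\Gamma_{-}(q^{-\rho})|0\rangle$, and identifying $(-Q_3)^{|\alpha_3|}$ with the diagonal matrix element of $(-Q_3)^{L_0}$, the partition of unity $\sum_{\alpha_3}|\alpha_3\rangle\langle\alpha_3| = 1$ collapses the sum into the single vacuum expectation value (\ref{a3sum=vev1}).

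Next I would apply the shift-symmetry identities (\ref{SS1})--(\ref{SS3}) of Lemma~2 to simplify the two exponentiated operators flanking $(-Q_3)^{L_0}$. For the left factor I insert $q^{-K/2}\Gamma'_{-}(q^{-\rho})$ to the right of $\langle 0|$, which is legitimate because $\langle 0|$ is annihilated by $K$ and fixed by $\Gamma'_{-}(q^{-\rho})$; then (\ref{SS1}) moves $V^{(-k)}_0 + \tfrac{1}{1-q^k}$ past the vertex-operator pair and converts it to $V^{(-k)}_k$, while (\ref{SS3}) further conjugates it into $q^{-k/2}J_k$, after which a standard vertex-operator identity recognises the resulting exponential as $(-Q_1)^{-L_0}\Gamma'_{+}(q^{-\beta_1-\rho})(-Q_1)^{L_0}$. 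A mirror-image manipulation with (\ref{SS2})--(\ref{SS3}) handles the right factor, producing $\Gamma_{-}(q^{-\rho})\Gamma_{+}(q^{-\rho})q^{K/2}(-Q_2)^{L_0}\Gamma_{-}(q^{-\tp{\beta}_2-\rho})$. Assembling these pieces, and folding in the surviving c-number prefactor $s_{\tp{\beta}_1}(q^{-\rho})s_{\tp{\beta}_2}(q^{-\rho})\prod_{i,j}(1-Q_1Q_2\cdots)^{-1}$ from (\ref{Zb1b2-red}), gives the intermediate expression (\ref{Zb1b2-med}).

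The last step is to absorb the boundary partitions $\beta_1,\beta_2$ into state vectors via the operator-state correspondence of Lemma~3. Applying (\ref{2leg-CS1}) with $\lambda=\beta_1$ to the dual form $s_{\tp{\beta}_1}(q^{-\rho})\langle 0|\Gamma'_{+}(q^{-\beta_1-\rho})$ turns the left boundary into $\langle\tp{\beta}_1|\Gamma_{-}(q^{-\rho})\Gamma_{+}(q^{-\rho})q^{K/2}$, and applying (\ref{2leg-CS2}) with $\lambda=\beta_2$ turns $s_{\tp{\beta}_2}(q^{-\rho})\Gamma_{-}(q^{-\tp{\beta}_2-\rho})|0\rangle$ into $q^{\kappa(\beta_2)/2}q^{-K/2}\Gamma'_{-}(q^{-\rho})\Gamma'_{+}(q^{-\rho})|\tp{\beta}_2\rangle$, contributing the prefactor $q^{\kappa(\beta_2)/2}$. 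Finally the conjugations $q^{K/2}(-Q_a)^{L_0}q^{-K/2} = (-Q_a)^{L_0}$ of (\ref{cancellation}), valid because $L_0$ commutes with $K$, eliminate all the stray $q^{\pm K/2}$ factors, and we land on (\ref{Zb1b2-fin}).

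I expect the main obstacle to be the careful operator bookkeeping in the middle step: one must track exactly where the auxiliary vertex-operator pairs $\Gamma'_{-}(q^{-\rho})\Gamma'_{+}(q^{-\rho})$ and $\Gamma_{-}(q^{-\rho})\Gamma_{+}(q^{-\rho})$ get generated by (\ref{SS1})--(\ref{SS2}), verify that the $k$-dependence in $V^{(-k)}_k \mapsto q^{-k/2}J_k$ correctly shifts $q^{-\beta_{1i}+i}$ to $q^{-\beta_{1i}+i-1/2}$ (and analogously $q^{-\tp{\beta}_{2j}+j-1}$ on the other side), and confirm that these operators conjugate as claimed when passed around $(-Q_3)^{L_0}$. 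The remaining manipulations---the exponential rewriting of the infinite products, the partition-of-unity collapse, and the final $q^{\pm K/2}$ cancellation---are routine once this is in place.
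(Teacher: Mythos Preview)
Your proposal is correct and follows essentially the same route as the paper: translate the $\alpha_3$-sum in (\ref{Zb1b2-red}) into the vacuum expectation value (\ref{a3sum=vev1}) via Lemma~1 and the partition of unity, apply the shift symmetries of Lemma~2 to reach (\ref{Zb1b2-med}), and then invoke the operator--state correspondence of Lemma~3 together with the cancellation (\ref{cancellation}) to obtain (\ref{Zb1b2-fin}). One minor remark: no operators actually need to be commuted past $(-Q_3)^{L_0}$, since the left and right blocks in (\ref{a3sum=vev1}) are processed independently into (\ref{vev1-left}) and (\ref{vev1-right}) and then simply reassembled around $(-Q_3)^{L_0}$.
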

  
Let us note here that the main part 
$\langle\tp{\beta}_1|\cdots|\tp{\beta_2}\rangle$ 
of this expression coincides with the open string amplitude 
of the on-strip web diagram shown in Figure \ref{fig5} 
(see Appendix A for general formulae of amplitudes).  
Thus, speaking schematically, 
gluing the one-leg vertex (\ref{Ca3}) 
to the on-strip web diagram of Figure \ref{fig4} 
generates another on-strip web diagram and its correction 
$q^{\kappa(\beta_2)/2}\prod_{i,j=1}^\infty(1-Q_1Q_2\cdots)$. 
This structure of (\ref{Zb1b2-fin}) is a key 
to derive $q$-difference equations for generating functions.

\begin{figure}
\centering
\includegraphics[scale=0.8]{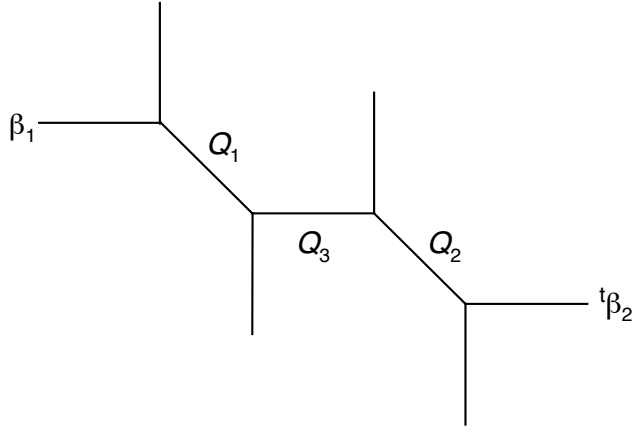}
\caption{Web diagram emerging in (\ref{Zb1b2-fin})}
\label{fig5}
\end{figure}

\section{$q$-difference equations for generating functions}

The foregoing expression (\ref{Zb1b2-fin}) 
of the open string amplitudes can be used 
to derive $q$-difference equation for the generating functions 
\begin{align}
  \Psi(x) &= \frac{1}{Z^{\ctv}_{\emptyset\emptyset}}
             \sum_{k=0}^\infty Z^{\ctv}_{(1^k)\emptyset}x^k,\\
  \tilde{\Psi}(x) &= \frac{1}{Z^{\ctv}_{\emptyset\emptyset}}
                     \sum_{k=0}^\infty Z^{\ctv}_{(k)\emptyset}x^k 
\end{align}
of special subsets of the normalized amplitudes 
$Z^{\ctv}_{\beta_1\beta_2}/Z^{\ctv}_{\emptyset\emptyset}$.  
Note that $(1^k)$ ($k$-copies of $1$) and $(k)$ 
represent Young diagrams with a single column or row.  
These generating functions are 
the Baker-Akhiezer functions\footnote{
Speaking more precisely, it is rather $\Psi(-x)$ 
and $\tilde{\Psi}(x)$ that literally correspond 
to the dual pair of Baker-Akhiezer functions.  
Because of this, the $q$-difference equations 
for $\Psi(x)$ and $\tilde{\Psi}(x)$ presented below 
are not fully symmetric. This is also the case 
for another pair $\Phi(x)$ and $\tilde{\Phi}(x)$ 
of generating functions introduced below.} 
of an integrable hierarchy, and $x$ amounts 
to the spectral variable therein \cite{Takasaki13bis}. 
One can derive $q$-difference equations for 
generating functions 
of $Z^{\ctv}_{\emptyset(1^k)}/Z^{\ctv}_{\emptyset\emptyset}$ 
and $Z^{\ctv}_{\emptyset(k)}/Z^{\ctv}_{\emptyset\emptyset}$ as well, 
though they become slightly more complicated 
because of the presence of the factor $q^{\kappa(\beta_2)/2}$.

\subsection{Derivation of $q$-difference equation}

A key towards the derivation of a $q$-difference equation 
is to compare $\Psi(x)$ and $\tilde{\Psi}(x)$ with 
another pair of generating functions 
\begin{align}
  \Phi(x) &= \frac{1}{Y_{\emptyset\emptyset}}
             \sum_{k=0}^\infty Y_{(1^k)\emptyset}x^k,\\
  \tilde{\Phi}(x) &= \frac{1}{Y_{\emptyset\emptyset}}
                     \sum_{k=0}^\infty Y_{(k)\emptyset}x^k 
\end{align}
obtained from the the main part 
\begin{align}
  Y_{\beta_1\beta_2} 
  &= \langle\tp{\beta}_1|
     \Gamma_{-}(q^{-\rho})\Gamma_{+}(q^{-\rho})(-Q_1)^{L_0}
     \Gamma'_{-}(q^{-\rho})\Gamma'_{+}(q^{-\rho})(-Q_3)^{L_0} \notag\\
  &\quad\mbox{}\times
     \Gamma_{-}(q^{-\rho})\Gamma_{+}(q^{-\rho})
     (-Q_2)^{L_0}\Gamma'_{-}(q^{-\rho})\Gamma'_{+}(q^{-\rho})
     |\tp{\beta}_2\rangle 
  \label{Yb1b2}
\end{align}
of the fermionic expression (\ref{Zb1b2-fin}) of $Z^{\ctv}_{\beta_1\beta_2}$. 

Let us first note the following relation between 
the coefficients of $\Psi(x)$ and $\Phi(x)$. 

\begin{lemma}
The coefficients of the expansion 
\beqnn
  \Psi(x) = \sum_{k=0}^\infty a_kx^k,\quad 
  \Phi(x) = \sum_{k=0}^\infty b_kx^k,\quad 
  a_0 = b_0 = 1, 
\eeqnn
are related as 
\beq
  a_k = b_k\prod_{i=1}^k(1 - Q_1Q_2q^{i-1})^{-1} 
  \quad \text{for $k\ge 1$}. 
\label{ak-bk-rel}
\eeq
\end{lemma}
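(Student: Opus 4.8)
The plan is to compare the two generating functions term by term using the fermionic expression \eqref{Zb1b2-fin} and the definition \eqref{Yb1b2} of $Y_{\beta_1\beta_2}$. By construction,
\beqnn
  Z^{\ctv}_{(1^k)\emptyset}
  = q^{\kappa(\emptyset)/2}\prod_{i,j=1}^\infty
    (1 - Q_1Q_2q^{-(1^k)_i - \tp{\emptyset}_j + i + j - 1})^{-1}
    \cdot Y_{(1^k)\emptyset},
\eeqnn
and since $\kappa(\emptyset) = 0$, the prefactor is just the infinite product $\prod_{i,j=1}^\infty(1 - Q_1Q_2 q^{-(1^k)_i + i + j - 1})^{-1}$. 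So the first step is to evaluate this product explicitly. The partition $(1^k)$ has $(1^k)_i = 1$ for $1 \le i \le k$ and $(1^k)_i = 0$ for $i > k$, so the exponent $-(1^k)_i + i$ equals $i - 1$ for $i \le k$ and $i$ for $i > k$; comparing with the $k = 0$ case (where the exponent is simply $i$ for all $i$), the ratio of the $(1^k)$ product to the $\emptyset$ product telescopes. Concretely, for each fixed $j$ the factors with $i > k$ are identical, and the factors with $1 \le i \le k$ differ by the shift $i \mapsto i-1$ in the exponent; after taking the product over all $j \ge 1$ and cancelling, one is left with a finite product $\prod_{i=1}^k (1 - Q_1 Q_2 q^{i-1})^{-1}$. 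This is the combinatorial heart of the argument and the step I expect to require the most care — one must track the telescoping over the doubly-infinite index set and confirm that only the $k$ boundary factors survive.

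The second step is bookkeeping. Writing $P_k := \prod_{i=1}^k(1 - Q_1Q_2q^{i-1})^{-1}$ (with $P_0 = 1$), the first step gives $Z^{\ctv}_{(1^k)\emptyset} = P_k\, Z^{\ctv}_{(1^k)\emptyset}\big/P_k \cdot \ldots$; more precisely it gives $Z^{\ctv}_{(1^k)\emptyset} = P_k\, Y_{(1^k)\emptyset}$ up to the common infinite-product normalization coming from $k=0$. In particular $Z^{\ctv}_{\emptyset\emptyset} = Y_{\emptyset\emptyset}$ after that normalization (here $P_0 = 1$), so dividing through,
\beqnn
  a_k = \frac{Z^{\ctv}_{(1^k)\emptyset}}{Z^{\ctv}_{\emptyset\emptyset}}
      = P_k\,\frac{Y_{(1^k)\emptyset}}{Y_{\emptyset\emptyset}}
      = P_k\, b_k
      = b_k\prod_{i=1}^k(1 - Q_1Q_2q^{i-1})^{-1},
\eeqnn
which is \eqref{ak-bk-rel}. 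The normalizations $a_0 = b_0 = 1$ are immediate since $\beta_1 = \emptyset$ makes the prefactor equal $1$.

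One subtlety worth flagging: the infinite product $\prod_{i,j=1}^\infty(1 - Q_1Q_2 q^{\cdots})$ must be interpreted as a formal (or convergent, for $|q|<1$ and $|Q_1Q_2|$ small) expression, and the telescoping argument should be carried out at the level of the exponentiated form $\exp(-\sum_{i,j,n}\frac1n (Q_1Q_2)^n q^{n(\cdots)})$ used elsewhere in the paper, where the cancellation of the $i>k$ terms is manifest and the surviving sum is a geometric-type series that resums to $-\sum_{i=1}^k \sum_n \frac1n (Q_1Q_2 q^{i-1})^n = \sum_{i=1}^k \log(1 - Q_1Q_2 q^{i-1})$. That converts the obstacle in step one into a routine manipulation of absolutely convergent double sums.
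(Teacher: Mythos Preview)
Your argument is correct and follows essentially the same route as the paper: both compute the ratio $a_k/b_k$ as the quotient of the infinite-product prefactors in \eqref{Zb1b2-fin} at $\beta_1=(1^k),\beta_2=\emptyset$ versus $\beta_1=\beta_2=\emptyset$, and both reduce that quotient by the same telescoping in $j$ to $\prod_{i=1}^k(1-Q_1Q_2q^{i-1})^{-1}$. The exposition in your second step is a bit tangled (the statement ``$Z^{\ctv}_{\emptyset\emptyset}=Y_{\emptyset\emptyset}$ after normalization'' is imprecise, since the common $k=0$ prefactor is not $1$), but once you pass to the normalized ratios $a_k$ and $b_k$ that constant cancels and the conclusion is exactly as in the paper.
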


\begin{proof}
$a_k/b_k$ is given by the ratio of the values 
of the prefactor in (\ref{Zb1b2-fin}) 
for $\beta_1 = (1^k),\,\beta_2 = \emptyset$ 
and $\beta_1 = \beta_2 = \emptyset$: 
\beqnn
\begin{aligned}
  \frac{a_k}{b_k} 
  &=\prod_{i,j=1}^\infty(1 - Q_1Q_2q^{-\beta_{1i}-\tp{\beta}_{2j}+i+j-1})^{-1}
    /\prod_{i,j=1}^\infty(1 -Q_1Q_2q^{i+j-1})^{-1}\\
  &= \prod_{i=1}^k\prod_{j=1}^\infty(1 - Q_1Q_2q^{i+j-2})^{-1}
     /\prod_{i=1}^k\prod_{j=1}^\infty(1 - Q_1Q_2q^{i+j-1})^{-1}\\
  &= \prod_{i=1}^k(1 - Q_1Q_2q^{i-1})^{-1}. 
\end{aligned}
\eeqnn
\end{proof}

The next step is to derive a $q$-difference equation for $\Phi(x)$. 

\begin{lemma}
$\Phi(x)$ can be expressed in the infinite-product form 
\beq
  \Phi(x) = \prod_{i=1}^\infty
      \frac{(1 - Q_1q^{i-1/2}x)(1 - Q_1Q_2Q_3q^{i-1/2}x)}
      {(1- q^{i-1/2}x)(1 - Q_1Q_3q^{i-1/2}x)}, 
  \label{Phi-infprod}
\eeq
and satisfies the $q$-difference equation 
\beq
  \Phi(qx) 
  = \frac{(1-q^{1/2}x)(1-Q_1Q_3q^{1/2}x)}
    {(1-Q_1q^{1/2}x)(1-Q_1Q_2Q_3q^{1/2}x)}\Phi(x).
  \label{Phi-qd-eq}
\eeq
\end{lemma}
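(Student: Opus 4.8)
The plan is to compute the generating function $\Phi(x)$ directly from the fermionic matrix element $Y_{(k)\emptyset}$ in \eqref{Yb1b2}, reduce it to a closed form, and read off the $q$-difference equation. First I would observe that summing over $k$ with a weight $x^k$ amounts to inserting a coherent-state-type vertex operator: since $|\,(k)\,\rangle$ is the basis vector for a single-row diagram, the well-known identity $\sum_{k\ge 0}|(k)\rangle x^k = \Gamma_{-}(x)|0\rangle$ (equivalently $\sum_k \langle\mu|(k)\rangle x^k = \langle\mu|\Gamma_-(x)|0\rangle$, the one-row Schur function $s_{(k)}$ being the complete homogeneous symmetric function) lets me write
\[
  Z^{\ctv}_{\emptyset\emptyset}\,\tilde\Phi(x)\big|_{\text{main part}}
  = \langle 0|\Gamma_+(x)\,\Gamma_{-}(q^{-\rho})\Gamma_{+}(q^{-\rho})(-Q_1)^{L_0}\cdots|0\rangle ,
\]
and similarly $\sum_k |(1^k)\rangle x^k = \Gamma'_-(x)|0\rangle$ for the single-column case feeding $\Phi(x)$. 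Then I would push the inserted $\Gamma'_{+}(x)$ (acting to the right from $\langle\emptyset|$, i.e. really $\langle 0|\Gamma'_+(x)$) through the chain of operators $\Gamma_{-}(q^{-\rho})\Gamma_{+}(q^{-\rho})(-Q_1)^{L_0}\Gamma'_{-}(q^{-\rho})\Gamma'_{+}(q^{-\rho})(-Q_3)^{L_0}\cdots$ down to the vacuum on the right, using the standard commutation relations among $\Gamma_\pm$, $\Gamma'_\pm$ and the scaling relation $(-Q)^{L_0}\Gamma_\pm(z)(-Q)^{-L_0} = \Gamma_\pm((-Q)^{\pm1}z)$. Each time a $\Gamma_+$ passes a $\Gamma_-$ it produces a scalar factor $(1-uv)^{-1}$, and each time it passes a $\Gamma'_-$ it produces $(1+uv)$; crossing the $L_0$-operators just rescales the argument by the accumulated Kähler parameters. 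The net effect is that $\langle 0|\Gamma'_+(x)|\text{everything}\rangle/\langle 0|\text{everything}\rangle$ collapses to a finite product of such binomials evaluated at $x$ times products of $q^{-\rho} = (q^{i-1/2})_{i\ge1}$, which is exactly the right-hand side of \eqref{Phi-infprod}.

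Concretely, I expect the $\Gamma_-(q^{-\rho})$ factor (with the first $(-Q_1)^{L_0}$ in front of it from $\langle 0|$'s perspective, no, rather — let me be careful about order) to contribute a denominator factor $\prod_i (1-q^{i-1/2}x)$, the $\Gamma'_-(q^{-\rho})$ sitting after $(-Q_1)^{L_0}$ to contribute a numerator factor $\prod_i(1-Q_1 q^{i-1/2}x)$ (the sign becoming $-$ because of the $(-Q_1)$ rescaling converting $1+$ into $1-$ after absorbing the sign), the $\Gamma_-(q^{-\rho})$ sitting after $(-Q_1)^{L_0}(-Q_3)^{L_0}$ to give a denominator $\prod_i(1-Q_1Q_3q^{i-1/2}x)$, and the final $\Gamma'_-(q^{-\rho})$ after all three $L_0$'s to give a numerator $\prod_i(1-Q_1Q_2Q_3q^{i-1/2}x)$. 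That is precisely \eqref{Phi-infprod}. Once the infinite product is established, the $q$-difference equation \eqref{Phi-qd-eq} is immediate: replacing $x$ by $qx$ shifts each product $\prod_{i\ge1}(1-c q^{i-1/2}x)$ to $\prod_{i\ge1}(1-c q^{i+1/2}x) = \prod_{i\ge1}(1-cq^{i-1/2}x)/(1-cq^{1/2}x)$, so the ratio $\Phi(qx)/\Phi(x)$ telescopes to the stated rational function in $x$.

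The main obstacle will be bookkeeping: getting the order of the vertex operators, the direction in which $\Gamma_+(x)$ is commuted, and the signs and argument-rescalings from the $(-Q_i)^{L_0}$ insertions exactly right, so that the four surviving binomial products land in the correct numerator/denominator slots with the correct coefficients $1, Q_1, Q_1Q_3, Q_1Q_2Q_3$. A clean way to organize this is to move $\Gamma'_+(x)$ rightward step by step, at each step recording the scalar emitted and updating the "current Kähler weight" by which $x$ is multiplied, using only $\Gamma_+(u)\Gamma_-(v) = (1-uv)^{-1}\Gamma_-(v)\Gamma_+(u)$, $\Gamma'_+(u)\Gamma_-(v) = (1+uv)\Gamma_-(v)\Gamma'_+(u)$, $\Gamma'_+(u)\Gamma'_-(v) = (1-uv)^{-1}\Gamma'_-(v)\Gamma'_+(u)$, $\Gamma_+(u)\Gamma'_-(v) = (1+uv)\Gamma'_-(v)\Gamma_+(u)$, and $\Gamma'_+(u)|0\rangle = |0\rangle$, $\langle 0|\Gamma'_+(q^{-\rho})$ being absorbed into the normalization by dividing by $Y_{\emptyset\emptyset}$. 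Everything else is routine telescoping.
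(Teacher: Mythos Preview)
Your approach is essentially the paper's: realize the generating sum by inserting a single-variable vertex operator on the bra side, commute it rightward through the chain $\Gamma_{-}\Gamma_{+}(-Q_1)^{L_0}\Gamma'_{-}\Gamma'_{+}(-Q_3)^{L_0}\Gamma_{-}\Gamma_{+}(-Q_2)^{L_0}\Gamma'_{-}\Gamma'_{+}$ using (\ref{Gamma-com-rel}) and the rescaling under $(-Q_i)^{L_0}$, collect the emitted scalars into (\ref{Phi-infprod}), and then telescope to get (\ref{Phi-qd-eq}).

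One bookkeeping slip to correct: because $Y_{\beta_1\beta_2}$ carries $\langle\tp{\beta}_1|$ and $\tp{(1^k)}=(k)$, the operator to insert for $\Phi(x)$ is $\Gamma_{+}(x)$ via $\sum_{k\ge 0}x^k\langle(k)|=\langle 0|\Gamma_{+}(x)$, not $\Gamma'_{+}(x)$. If you literally pushed $\Gamma'_{+}(x)$ through, the first $\Gamma_{-}(q^{-\rho})$ would emit $\prod_i(1+q^{i-1/2}x)$ rather than the denominator $\prod_i(1-q^{i-1/2}x)^{-1}$ you (correctly) expect. Your listed factors---denominators from the two $\Gamma_{-}$'s and numerators from the two $\Gamma'_{-}$'s, with the signs in $(-Q_i)$ converting $1+$ to $1-$---are exactly what commuting $\Gamma_{+}(x)$ produces, so your computation is right and only the label on the inserted operator needs fixing.
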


\begin{remark}
The infinite product $\prod_{i=1}^\infty(1 - q^{i-1/2}x)^{-1}$ 
is an expression of the quantum dilogarithmic function \cite{FV93,FK93}. 
Thus $\Phi(x)$ is a multiplicative combination 
of four quantum dilogarithmic functions.  
\end{remark}

\begin{proof}
$Y_{(1^k)\emptyset}$ can be expressed as 
\beqnn
\begin{aligned}
  Y_{(1^k)\emptyset} 
  &= \langle(k)|\Gamma_{-}(q^{-\rho})\Gamma_{+}(q^{-\rho})(-Q_1)^{L_0}
     \Gamma'_{-}(q^{-\rho})\Gamma'_{+}(q^{-\rho})(-Q_3)^{L_0}\\
  &\quad\mbox{}\times
     \Gamma_{-}(q^{-\rho})\Gamma_{+}(q^{-\rho})
     (-Q_2)^{L_0}\Gamma'_{-}(q^{-\rho})\Gamma'_{+}(q^{-\rho})|0\rangle. 
\end{aligned}
\eeqnn
By the fundamental properties 
\beq
  \sum_{k=0}^\infty x^k\langle(k)| = \langle 0|\Gamma_{+}(x),\quad 
  \sum_{k=0}^\infty x^k\langle(1^k)| = \langle 0|\Gamma'_{+}(x), 
\eeq
of the single-variate vertex operators \cite{OR01,BY08}, 
the generating function of $Y_{(1^k)\emptyset}$'s 
can be expressed as 
\beqnn
\begin{aligned}
  \sum_{k=0}^\infty Y_{(1^k)\emptyset}x^k 
  &= \langle 0|\Gamma_{+}(x)
     \Gamma_{-}(q^{-\rho})\Gamma_{+}(q^{-\rho})(-Q_1)^{L_0}
     \Gamma'_{-}(q^{-\rho})\Gamma'_{+}(q^{-\rho})(-Q_3)^{L_0}\\
  &\quad\mbox{}\times
     \Gamma_{-}(q^{-\rho})\Gamma_{+}(q^{-\rho})
     (-Q_2)^{L_0}\Gamma'_{-}(q^{-\rho})\Gamma'_{+}(q^{-\rho})|0\rangle. 
\end{aligned}
\eeqnn
One can now use the commutation relations \cite{OR01,BY08} 
\begin{align}
  \Gamma_{+}(x)\Gamma_{-}(y) 
    &= (1 - xy)^{-1}\Gamma_{-}(y)\Gamma_{+}(x),\notag\\
  \Gamma'_{+}(x)\Gamma'_{-}(y) 
    &= (1 - xy)^{-1}\Gamma'_{-}(y)\Gamma'_{+}(x),\notag\\
  \Gamma_{+}(x)\Gamma'_{-}(y) 
    &= (1 + xy)\Gamma'_{-}(y)\Gamma_{+}(x),\notag\\
  \Gamma'_{+}(x)\Gamma_{-}(y)
    &= (1 + xy)\Gamma_{-}(y)\Gamma'_{+}(x) 
  \label{Gamma-com-rel}
\end{align}
of the single-variate vertex operators 
to move $\Gamma_{+}(x)$ to the right 
until it hits $|0\rangle$ and disappears.  
This yields the infinite-product expression 
\beqnn
  \sum_{k=0}^\infty Y_{(1^k)\emptyset}x^k  
  = \prod_{i=1}^\infty 
    \frac{(1-Q_1q^{i-1/2}x)(1-Q_1Q_2Q_3q^{i-1/2}x)}
    {(1-q^{i-1/2}x)(1-Q_1Q_3q^{i-1/2}x)}
    Y_{\emptyset\emptyset}
\eeqnn
of the unnormalized generating function, 
hence the expression (\ref{Phi-infprod}) of $\Phi(x)$.  
The $q$-difference equation (\ref{Phi-qd-eq}) 
is an immediate consequence of (\ref{Phi-infprod}). 
\end{proof}

To derive a $q$-difference equation for $\Psi(x)$, 
let us rewrite (\ref{Phi-qd-eq}) as 
\beqnn
\begin{aligned}
  &(1 - Q_1(1+Q_2Q_3)q^{1/2}x + Q_1^2Q_2Q_3qx^2)\Phi(qx)\\
  &= (1 - (1+Q_1Q_3)q^{1/2}x + Q_1Q_3qx^2)\Phi(x)
\end{aligned}
\eeqnn
and extract the coefficients of $x^k$. 
This yields the recursion relations 
\begin{align}
  &q^kb_k - Q_1(1+Q_2Q_3)q^{1/2}q^{k-1}b_{k-1} 
   + Q_1^2Q_2Q_3qq^{k-2}b_{k-2}\notag\\
  &= b_k - (1+Q_1Q_3)q^{1/2}b_{k-1} + Q_1Q_3qb_{k-2} 
  \label{bk-recursion}
\end{align}
for $b_k$'s.  Note that these relations hold for all $k\in\ZZ$ 
if $b_k$'s for $k<0$ are understood to be $0$. 
By (\ref{ak-bk-rel}), these recursion relations turn 
into the recursion relations 
\begin{align}
  &(1 - Q_1Q_2q^{k-2})(1 - Q_1Q_2q^{k-1})q^ka_k \notag\\
  &\mbox{} - Q_1(1+Q_2Q_3)q^{1/2}(1 - Q_1Q_2q^{k-2})q^{k-1}a_{k-1} 
    + Q_1^2Q_2Q_3qq^{k-2}a_{k-2} \notag\\
  &= (1 - Q_1Q_2q^{k-2})(1- Q_1Q_2q^{k-1})a_k \notag\\
  &\quad\mbox{} - (1+Q_1Q_3)q^{1/2}(1 - Q_1Q_2q^{k-2})a_{k-1} 
      + Q_1Q_3qa_{k-2} 
  \label{ak-recursion}
\end{align}
for $a_k$'s.  Multiplying these equations by $x^k$ 
and taking the sum over $k = 0,1,\ldots$, 
we can derive a $q$-difference equation for $\Phi(x)$.  

To state this result in a compact form, let us use 
the shift operator $q^{x\rd_x}$, $\rd_x = \rd/\rd x$, 
that acts on a function $f(x)$ of $x$ as 
\beqnn
  q^{x\rd_x}f(x) = f(qx). 
\eeqnn

\begin{theorem}\label{theorem2}
$\Psi(x)$ satisfies the $q$-difference equation 
\begin{align}
  &(1 - Q_1Q_2q^{-2}q^{x\rd_x})(1- Q_1Q_2q^{-1}q^{x\rd_x})\Psi(qx)\notag\\
  &\mbox{} - Q_1(1+Q_2Q_3)q^{1/2}x(1 - Q_1Q_2q^{-1}q^{x\rd_x})\Psi(qx) 
    + Q_1^2Q_2Q_3qx^2\Psi(qx) \notag\\
  &= (1 - Q_1Q_2q^{-2}q^{x\rd_x})(1 - Q_1Q_2q^{-1}q^{x\rd_x})\Psi(x)\notag\\
  &\quad\mbox{} - (1+Q_1Q_3)q^{1/2}x(1 - Q_1Q_2q^{-1}q^{x\rd_x})\Psi(x) 
     + Q_1Q_3qx^2\Psi(x).
  \label{Psi-qd-eq}
\end{align}
\end{theorem}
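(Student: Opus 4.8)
The plan is to derive (\ref{Psi-qd-eq}) directly from the coefficient recursion (\ref{ak-recursion}) for $a_k$, which is already in hand: it came from the infinite-product form (\ref{Phi-infprod}) of $\Phi(x)$, the resulting $q$-difference equation (\ref{Phi-qd-eq}), and the coefficient comparison (\ref{ak-bk-rel}). Since (\ref{ak-recursion}) holds for all $k\in\ZZ$ once we set $a_k=0$ for $k<0$, summing it against $x^k$ over $k\ge 0$ produces no boundary terms, and the whole proof reduces to translating each term of (\ref{ak-recursion}) into the language of the shift operator $q^{x\rd_x}$.

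Concretely, I would use the elementary dictionary
\[
  \sum_{k\ge 0}a_kx^k=\Psi(x),\qquad
  \sum_{k}q^{c(k-j)}a_{k-j}x^k=x^j(q^{x\rd_x})^c\Psi(x),
\]
the latter obtained by the index shift $\ell=k-j$ together with $(q^{x\rd_x})^c\Psi(x)=\Psi(q^cx)$; in particular $\sum_k q^ka_kx^k=\Psi(qx)$ and $\sum_k q^{k-1}a_{k-1}x^k=x\Psi(qx)$. The one point that needs attention is operator ordering: because $q^{x\rd_x}$ and multiplication by $x$ obey $q^{x\rd_x}\circ x=qx\circ q^{x\rd_x}$, the powers of $x$ generated by the index shifts of the $a_{k-1}$- and $a_{k-2}$-terms must be written to the \emph{left} of the difference operators $(1-Q_1Q_2q^{-1}q^{x\rd_x})$. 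With this proviso, the first term of (\ref{ak-recursion}) sums to $(1-Q_1Q_2q^{-2}q^{x\rd_x})(1-Q_1Q_2q^{-1}q^{x\rd_x})\Psi(qx)$, the second to $-Q_1(1+Q_2Q_3)q^{1/2}x(1-Q_1Q_2q^{-1}q^{x\rd_x})\Psi(qx)$, the third to $Q_1^2Q_2Q_3qx^2\Psi(qx)$, and the three terms on the right-hand side are handled identically with $\Psi(qx)$ replaced by $\Psi(x)$ (and $Q_1Q_3qx^2\Psi(x)$ in place of the last one). Assembling these reproduces (\ref{Psi-qd-eq}) line by line.

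Every manipulation is a formal power-series identity in $x$, so no convergence questions arise, and the only input beyond the preceding lemmas is the commutation rule for $q^{x\rd_x}$ and $x$. The sole spot where I expect an error could slip in is the bookkeeping of powers of $q$ inside the two-factor products; it is worth checking explicitly that $(1-Q_1Q_2q^{-2}q^{x\rd_x})(1-Q_1Q_2q^{-1}q^{x\rd_x})$ acting on $\sum_kq^ka_kx^k$ yields $\sum_k(1-Q_1Q_2q^{k-2})(1-Q_1Q_2q^{k-1})q^ka_kx^k$, which holds because the two operator factors commute and $(q^{x\rd_x})^2$ scales the $k$-th coefficient of $\Psi(qx)$ by $q^{2k}$.
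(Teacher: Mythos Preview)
Your proposal is correct and follows exactly the route the paper takes: the paper also passes from the $q$-difference equation (\ref{Phi-qd-eq}) for $\Phi(x)$ to the recursion (\ref{bk-recursion}), transforms it via (\ref{ak-bk-rel}) into (\ref{ak-recursion}), and then states that ``multiplying these equations by $x^k$ and taking the sum over $k = 0,1,\ldots$'' yields (\ref{Psi-qd-eq}). Your write-up simply makes explicit the summation dictionary and the operator-ordering check that the paper leaves to the reader.
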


A $q$-difference equation for $\tilde{\Psi}(x)$ can be derived 
in the same way from the $q$-difference equation 
\beq
  \tilde{\Phi}(qx) 
  = \frac{(1 + Q_1q^{1/2}x)(1 + Q_1Q_2Q_3q^{1/2}x)}
    {(1 + q^{1/2}x)(1 + Q_1Q_3q^{1/2}x)}\tilde{\Phi}(x)
  \label{tPhi-qd-eq}
\eeq
for $\tilde{\Phi}(x)$ and the relation 
\beq
  \tilde{a}_k = \tilde{b}_k\prod_{i=1}^k(1 - Q_1Q_2q^{1-i})^{-1} 
  \quad \text{for $k \ge 1$} 
  \label{tak-tbk-rel}
\eeq
between the coefficients of the expansion 
\beqnn
  \tilde{\Psi}(x) = \sum_{k=0}^\infty \tilde{a}_kx^k,\quad 
  \tilde{\Phi}(x) = \sum_{k=0}^\infty \tilde{b}_kx^k,\quad 
  \tilde{a}_0 = \tilde{b}_0 = 1, 
\eeqnn
of $\tilde{\Psi}(x)$ and $\tilde{\Phi}(x)$. 
We omit the detail of calculation and show the result: 

\begin{theorem}\label{theorem3}
$\tilde{\Psi}(x)$ satisfies the $q$-difference equation 
\begin{align}
  &(1 - Q_1Q_2q^2q^{-x\rd_x})(1 - Q_1Q_2qq^{-x\rd_x})\tilde{\Psi}(q^{-1}x)\notag\\
  &\mbox{} + Q_1(1+Q_2Q_3)q^{-1/2}x(1 - Q_1Q_2qq^{-x\rd_x})\tilde{\Psi}(q^{-1}x) 
    + Q_1^2Q_2Q_3q^{-1}x^2\tilde{\Psi}(q^{-1}x) \notag\\
  &= (1 - Q_1Q_2q^2q^{-x\rd_x})(1 - Q_1Q_2qq^{-x\rd_x})\tilde{\Psi}(x)\notag\\
  &\quad\mbox{} + (1+Q_1Q_3)q^{-1/2}x(1 - Q_1Q_2qq^{-x\rd_x})\tilde{\Psi}(x) 
    + Q_1Q_3q^{-1}x^2\tilde{\Psi}(x). 
\label{tPsi-qd-eq}
\end{align}
\end{theorem}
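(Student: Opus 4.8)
The plan is to mimic the derivation of Theorem \ref{theorem2} step by step, replacing $\Psi$ by $\tilde\Psi$ and tracking the sign changes that come from using rows instead of columns. First I would record the analogue of the generating-function identity for $\tilde\Phi$: using the fact stated in the proof of the $\Phi$-lemma that $\sum_{k\ge 0}x^k\langle(k)| = \langle 0|\Gamma_+(x)$, one inserts $\Gamma_+(x)$ in front of the operator product defining $Y_{(k)\emptyset}$ and commutes it to the right through the eight vertex operators and the three $(-Q_i)^{L_0}$'s using \eqref{Gamma-com-rel}. The only difference from the $\Phi$ computation is that the leftmost operator is now $\Gamma_+$ rather than $\Gamma'_+$, so the roles of the $(1-xy)^{-1}$ factors and the $(1+xy)$ factors are interchanged; this produces
\beqnn
  \tilde\Phi(x) = \prod_{i=1}^\infty
    \frac{(1+Q_1q^{i-1/2}x)(1+Q_1Q_2Q_3q^{i-1/2}x)}
         {(1+q^{i-1/2}x)(1+Q_1Q_3q^{i-1/2}x)},
\eeqnn
from which \eqref{tPhi-qd-eq} is immediate by dividing the $i\mapsto i+1$ shifted product by the original.

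Next I would establish \eqref{tak-tbk-rel}. This is the exact analogue of the $\Psi$-lemma: $\tilde a_k/\tilde b_k$ equals the ratio of the prefactors $q^{\kappa(\beta_2)/2}\prod_{i,j}(1-Q_1Q_2q^{-\beta_{1i}-\tp\beta_{2j}+i+j-1})^{-1}$ in \eqref{Zb1b2-fin} evaluated at $\beta_1=(k),\beta_2=\emptyset$ and at $\beta_1=\beta_2=\emptyset$. Since $\beta_2=\emptyset$ the $q^{\kappa(\beta_2)/2}$ factor is trivial, and for $\beta_1=(k)$ one has $\beta_{11}=k$, $\beta_{1i}=0$ for $i\ge 2$, so the ratio telescopes to $\prod_{j=1}^\infty(1-Q_1Q_2q^{-k+j})/(1-Q_1Q_2q^{j}) = \prod_{i=1}^k(1-Q_1Q_2q^{1-i})^{-1}$, which is \eqref{tak-tbk-rel}. (Here the exponent runs $1-i$ rather than $i-1$ precisely because the nonzero part of the partition sits in the first row, i.e.\ in the variable $\beta_{11}$, instead of being spread over the first $k$ rows.)

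Finally I would convert the infinite-product $q$-difference equation \eqref{tPhi-qd-eq} into a three-term recursion for the $\tilde b_k$'s by clearing denominators —
\beqnn
\begin{aligned}
  &(1+(1+Q_1Q_3)q^{1/2}x+Q_1Q_3qx^2)\,\tilde\Phi(x)\\
  &= (1+Q_1(1+Q_2Q_3)q^{1/2}x+Q_1^2Q_2Q_3qx^2)\,\tilde\Phi(q^{-1}x),
\end{aligned}
\eeqnn
wait, more carefully: replacing $x$ by $q^{-1}x$ in \eqref{tPhi-qd-eq} and cross-multiplying gives a relation between $\tilde\Phi(x)$ and $\tilde\Phi(q^{-1}x)$; extracting the coefficient of $x^k$ yields a recursion analogous to \eqref{bk-recursion} but with $q^{x\rd_x}$ replaced by $q^{-x\rd_x}$ and the appropriate sign flips. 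Substituting $\tilde b_k = \tilde a_k\prod_{i=1}^k(1-Q_1Q_2q^{1-i})$ via \eqref{tak-tbk-rel} then turns this into a recursion for the $\tilde a_k$'s, and multiplying by $x^k$ and summing over $k\ge 0$ repackages it as the stated $q$-difference operator identity \eqref{tPsi-qd-eq}; the finitely many boundary corrections for $k=0,1$ are absorbed because $\tilde a_k=\tilde b_k=0$ for $k<0$. The main bookkeeping obstacle is getting every sign right: the substitution $x\to -x$ relating the $\Phi$-world to the $\tilde\Phi$-world flips signs in the numerator/denominator factors and also in the shift direction ($qx$ versus $q^{-1}x$), so the cleanest route is probably to derive everything directly in the $\tilde\Phi$ variables as above rather than trying to transport the $\Psi$ result through a change of variable. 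Once the recursion for $\tilde a_k$ is in hand, recognizing that $\sum_k q^{-k}(\cdots)\tilde a_k x^k = (\cdots q^{-x\rd_x})\tilde\Psi(x)$ and $\sum_k x\tilde a_{k-1}x^{k-1}=x\tilde\Psi(x)$ makes the final assembly into \eqref{tPsi-qd-eq} routine.
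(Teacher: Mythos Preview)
Your overall strategy is exactly the paper's: it says only ``A $q$-difference equation for $\tilde{\Psi}(x)$ can be derived in the same way from \eqref{tPhi-qd-eq} and \eqref{tak-tbk-rel}\ldots\ We omit the detail,'' and you are supplying precisely those details. Your verification of \eqref{tak-tbk-rel} is correct, and the recursion-to-$q$-difference-equation step is indeed routine once the ingredients are in place.

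However, there is a concrete slip in your first step. Since $Y_{\beta_1\beta_2}=\langle\tp{\beta}_1|\cdots|\tp{\beta}_2\rangle$, the amplitude $Y_{(k)\emptyset}$ involves $\langle\tp{(k)}|=\langle(1^k)|$, \emph{not} $\langle(k)|$. Hence the generating function $\sum_k x^k Y_{(k)\emptyset}$ is obtained by inserting $\Gamma'_+(x)$ (via $\sum_k x^k\langle(1^k)|=\langle 0|\Gamma'_+(x)$), not $\Gamma_+(x)$; you have the roles of the two cases reversed. Pushing $\Gamma'_+(x)$ through the operator product gives
\[
  \tilde\Phi(x)=\prod_{i=1}^\infty
  \frac{(1+q^{i-1/2}x)(1+Q_1Q_3q^{i-1/2}x)}
       {(1+Q_1q^{i-1/2}x)(1+Q_1Q_2Q_3q^{i-1/2}x)},
\]
which is the reciprocal of the product you wrote. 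Your displayed product would yield $\tilde\Phi(qx)/\tilde\Phi(x)$ equal to the reciprocal of the right-hand side of \eqref{tPhi-qd-eq}, so \eqref{tPhi-qd-eq} would \emph{not} follow from it. With the correct product above, \eqref{tPhi-qd-eq} does follow immediately, and the rest of your argument goes through unchanged.
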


\begin{remark}
Both sides of (\ref{Psi-qd-eq}) and (\ref{tPsi-qd-eq}) 
can be rewritten as 
\begin{align}
  &(1 - Q_1Q_2q^{-2}q^{x\rd_x} - Q_1q^{1/2}x) 
   (1 - Q_1Q_2q^{-1}q^{x\rd_x} - Q_1Q_2Q_3q^{1/2}x)\Psi(qx)\notag\\
  &= (1 - Q_1Q_2q^{-2}q^{x\rd_x} - q^{1/2}x) 
     (1 - Q_1Q_2q^{-1}q^{x\rd_x} - Q_1Q_3q^{1/2}x)\Psi(x) 
\end{align}
and 
\begin{align}
  &(1 - Q_1Q_2q^2q^{-x\rd_x} + Q_1q^{-1/2}x)
   (1 - Q_1Q_2qq^{-x\rd_x} + Q_1Q_2Q_3q^{-1/2}x)\tilde{\Psi}(q^{-1}x)\notag\\
  &= (1 - Q_1Q_2q^2q^{-x\rd_x} + q^{-1/2}x) 
     (1 - Q_1Q_2qq^{-x\rd_x} + Q_1Q_3q^{-1/2}x)\tilde{\Psi}(x). 
\end{align}
This expression corresponds to writing $q$-difference equations 
for $\Phi(x)$ and $\tilde{\Phi}(x)$ as 
\beq
  (1 - Q_1q^{1/2}x)(1 - Q_1Q_2Q_3q^{1/2}x)\Phi(qx)
  = (1 - q^{1/2}x)(1 - Q_1Q_3q^{1/2}x)\Phi(x)
\eeq
and 
\beq
  (1 + Q_1q^{-1/2}x)(1 + Q_1Q_2Q_3q^{-1/2}x)\tilde{\Phi}(q^{-1}x) 
  = (1 + q^{-1/2}x)(1 + Q_1Q_3q^{-1/2}x)\tilde{\Phi}(x). 
\eeq
These $q$-difference equations are transformed 
to the foregoing ones for $\Psi(x)$ and $\tilde{\Psi}(x)$ 
by the transformation (\ref{ak-bk-rel}) and (\ref{tak-tbk-rel}) 
of the coefficients.  

\end{remark}

\subsection{Structure of $q$-difference operators} 

Let us rewrite (\ref{Psi-qd-eq}) as 
\beq
  H(x,q^{x\rd_x})\Psi(x) = 0 
  \label{HPsi=0}
\eeq
and examine the structure of the $q$-difference operator 
$H$. This operator reads 
\begin{align}
  H(x,q^{x\rd_x}) 
  &= (1 - Q_1Q_2q^{-2}q^{x\rd_x})(1 - Q_1Q_2q^{-1}q^{x\rd_x})\notag\\
  &\quad\mbox{} 
     - (1+Q_1Q_3)q^{1/2}x(1 - Q_1Q_2q^{-1}q^{x\rd_x}) 
     + Q_1Q_3qx^2 \notag\\
  &\quad\mbox{} 
     - (1 - Q_1Q_2q^{-2}q^{x\rd_x})(1- Q_1Q_2q^{-1}q^{x\rd_x})q^{x\rd_x}\notag\\
  &\quad \mbox{} 
     + Q_1(1+Q_2Q_3)q^{1/2}x(1 - Q_1Q_2q^{-1}q^{x\rd_x})q^{x\rd_x}
     - Q_1^2Q_2Q_3qx^2q^{x\rd_x}. 
  \label{H}
\end{align}
Remarkably, $H(x,q^{x\rd_x})$ can be factorized as 
\beq
  H(x,q^{x\rd_x}) 
  = (1 - Q_1Q_2q^{-2}q^{x\rd_x})K(x,q^{x\rd_x}), 
\eeq
where 
\begin{align}
  K(x,q^{x\rd_x}) 
  &= (1 - Q_1Q_2q^{-1}q^{x\rd_x})(1 - q^{x\rd_x}) - (1+Q_1Q_3)q^{1/2}x \notag\\
  &\quad\mbox{} 
     + Q_1(1+Q_2Q_3)q^{1/2}xq^{x\rd_x} + Q_1Q_3qx^2. 
  \label{K}
\end{align}

This is also the case for the $q$-difference equation 
(\ref{tPsi-qd-eq}) for $\tilde{\Psi}(x)$.  
The $q$-difference operator $\tilde{H}(x,q^{x\rd_x})$ 
in the expression 
\beq
  \tilde{H}(x,q^{x\rd_x})\tilde{\Psi}(x) = 0 
  \label{tHtPsi=0}
\eeq
of (\ref{tPsi-qd-eq}) reads 
\begin{align}
  \tilde{H}(x,q^{x\rd_x}) 
  &= (1 - Q_1Q_2q^2q^{-x\rd_x})(1 - Q_1Q_2qq^{-x\rd_x})\notag\\
  &\quad\mbox{} + (1+Q_1Q_3)q^{1/2}x(1 - Q_1Q_2qq^{-x\rd_x})  
    + Q_1Q_3qx^2 \notag\\
  &\quad\mbox{}
    - (1 - Q_1Q_2q^2q^{-x\rd_x})(1 - Q_1Q_2qq^{-x\rd_x})q^{-x\rd_x}\notag\\
  &\quad\mbox{} 
    - Q_1(1+Q_2Q_3)q^{1/2}x(1 - Q_1Q_2qq^{-x\rd_x})q^{-x\rd_x} 
    - Q_1^2Q_2Q_3qx^2q^{-x\rd_x}.
  \label{tH}
\end{align}
This operator can be factorized as 
\beq
  \tilde{H}(x,q^{x\rd_x}) 
  = (1 - Q_1Q_2q^2q^{-x\rd_x})\tilde{K}(x,q^{x\rd_x}), 
\eeq
where 
\begin{align}
  \tilde{K}(x,q^{x\rd_x}) 
  &= (1 - Q_1Q_2qq^{-x\rd_x})(1 - q^{-x\rd_x}) + (1+Q_1Q_3)q^{1/2}x \notag\\
  &\quad\mbox{} 
     - Q_1(1+Q_2Q_3)q^{1/2}xq^{-x\rd_x} + Q_1Q_3qx^2. 
  \label{tK}
\end{align}

Let us note here that the action of 
$1 - Q_1Q_2q^{-2}q^{x\rd_x}$ and $1 - Q_1Q_2q^2q^{-x\rd_x}$ 
on the space of power series of $x$ is invertible 
as far as $Q_1$ and $Q_2$ take {\it generic values\/}, 
i.e., apart from the exceptional cases 
where $Q_1Q_2 = q^n$, $n \in \ZZ$.  
Therefore these factors can be removed 
from the $q$-difference equations (\ref{HPsi=0}) 
and (\ref{tHtPsi=0}). Actually, 
this genericity is implicitly assumed in the transformations 
(\ref{ak-bk-rel}) and (\ref{tak-tbk-rel}) 
of these generating functions.  Thus we find 
the following refinement of Theorems 
\ref{theorem2} and \ref{theorem3}. 

\begin{theorem}
For generic values of $Q_1$ and $Q_2$, 
the $q$-difference equations (\ref{Psi-qd-eq}) 
and (\ref{tPsi-qd-eq}) can be reduced to 
\beq
  K(x,q^{x\rd_x})\Psi(x) = 0,\quad 
  \tilde{K}(x,q^{x\rd_x})\tilde{\Psi}(x) = 0. 
  \label{red-qd-eq}
\eeq
\end{theorem}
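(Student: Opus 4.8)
The plan is to start from Theorems \ref{theorem2} and \ref{theorem3}, rewrite their left-hand and right-hand sides as the single $q$-difference operators $H(x,q^{x\rd_x})$ and $\tilde H(x,q^{x\rd_x})$ applied to $\Psi(x)$ and $\tilde\Psi(x)$ respectively (i.e.\ collect all terms to one side, exactly as in \eqref{HPsi=0} and \eqref{tHtPsi=0}, with the explicit operators \eqref{H} and \eqref{tH}). The first substantive step is then to verify the announced factorizations $H = (1 - Q_1Q_2q^{-2}q^{x\rd_x})K$ and $\tilde H = (1 - Q_1Q_2q^2q^{-x\rd_x})\tilde K$ with $K$ and $\tilde K$ given by \eqref{K} and \eqref{tK}. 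Here one must be careful that $q^{x\rd_x}$ and multiplication by $x$ do not commute: the relation $q^{x\rd_x}\,x = qx\,q^{x\rd_x}$ (and more generally $q^{x\rd_x}x^n = q^n x^n q^{x\rd_x}$) must be used consistently when moving the factor $1 - Q_1Q_2q^{-2}q^{x\rd_x}$ to the left through the terms of $K$. I would expand $(1 - Q_1Q_2q^{-2}q^{x\rd_x})K(x,q^{x\rd_x})$ monomial by monomial in $q^{x\rd_x}$ and $x$, normal-order everything (all $q^{x\rd_x}$ to the right), and check term-by-term agreement with \eqref{H}; the analogous computation with $q^{-x\rd_x}x^n = q^{-n}x^n q^{-x\rd_x}$ handles $\tilde H = (1 - Q_1Q_2q^2q^{-x\rd_x})\tilde K$.

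The second step is to justify that the prefactors $1 - Q_1Q_2q^{-2}q^{x\rd_x}$ and $1 - Q_1Q_2q^2q^{-x\rd_x}$ act invertibly on the space of formal power series in $x$. Writing $f(x)=\sum_{k\ge 0}c_kx^k$, the operator $1 - Q_1Q_2q^{-2}q^{x\rd_x}$ sends $c_k$ to $(1 - Q_1Q_2q^{k-2})c_k$, which is a diagonal action with eigenvalues $1 - Q_1Q_2q^{k-2}$; this is invertible on the whole space precisely when $1 - Q_1Q_2q^{k-2}\neq 0$ for all $k\ge 0$, i.e.\ when $Q_1Q_2\ne q^n$ for every $n\in\ZZ$ --- the stated genericity hypothesis. (The same diagonal argument, with eigenvalues $1 - Q_1Q_2q^{2-k}$, applies to $1 - Q_1Q_2q^2q^{-x\rd_x}$, and the excluded locus is again $Q_1Q_2 = q^n$.) Consequently $H(x,q^{x\rd_x})\Psi(x)=0$ is, under this hypothesis, equivalent to $K(x,q^{x\rd_x})\Psi(x)=0$, and likewise $\tilde H\tilde\Psi=0$ reduces to $\tilde K\tilde\Psi=0$, which is exactly \eqref{red-qd-eq}.

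A small consistency remark worth including is that this genericity is the very condition already implicit in Lemma's transformation \eqref{ak-bk-rel} (and \eqref{tak-tbk-rel}), where one divides by $\prod_{i=1}^k(1 - Q_1Q_2q^{i-1})$; so no new assumption is really being introduced. The main obstacle is purely the bookkeeping in the factorization step: because of the noncommutativity $q^{x\rd_x}x = qx\,q^{x\rd_x}$, a naive ``polynomial factorization'' of the symbol in the commuting variables $x$ and $u=q^{x\rd_x}$ is not automatically a factorization of the operator, so one genuinely has to normal-order the product $(1 - Q_1Q_2q^{-2}u)K$ and match it against \eqref{H} coefficient by coefficient --- in particular the cross terms in which $u$ passes an $x$ or an $x^2$ pick up powers of $q$ that must land exactly on the coefficients appearing in \eqref{H}. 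This is routine but must be done carefully; everything else (the diagonalization argument, the reduction) is immediate.
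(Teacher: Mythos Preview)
Your proposal is correct and follows essentially the same route as the paper: the paper likewise rewrites \eqref{Psi-qd-eq} and \eqref{tPsi-qd-eq} as $H\Psi=0$ and $\tilde H\tilde\Psi=0$, asserts the factorizations $H=(1-Q_1Q_2q^{-2}q^{x\rd_x})K$ and $\tilde H=(1-Q_1Q_2q^{2}q^{-x\rd_x})\tilde K$, and then observes that the left factors act invertibly on formal power series in $x$ away from the locus $Q_1Q_2=q^n$, $n\in\ZZ$ (noting, as you do, that this genericity is already implicit in \eqref{ak-bk-rel} and \eqref{tak-tbk-rel}). Your plan is simply more explicit about the bookkeeping with $q^{x\rd_x}x=qx\,q^{x\rd_x}$ in verifying the factorization, which the paper states without detailed computation.
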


This result fits well into the perspectives 
of mirror geometry of topological string theory 
on non-compact toric Calabi-Yau threefolds \cite{ADKMV03,DV07}.  
$\Psi(x)$ and $\tilde{\Psi}(x)$ may be thought of 
as {\it wave functions} of a probe D-brane.  
In this interpretation, a $q$-difference equation 
satisfied by these functions defines 
a {\it quantum mirror curve}.  
The $q$-difference equations (\ref{red-qd-eq}) 
indeed have such a characteristic.  
In the classical limit as $q\to1$, 
the non-commutative polynomials $K(x,q^{x\rd_x})$ 
and $\tilde{K}(x,q^{-x\rd_x})$ turn into the ordinary polynomials 
\begin{align}
  K_{\cl}(x,y) &= (1 - Q_1Q_2y)(1 - y) - (1+Q_1Q_3)x \notag\\
  &\quad\mbox{} + Q_1(1+Q_2Q_3)xy + Q_1Q_3x^2
\end{align}
in $(x,y)$ and 
\begin{align}
   \tilde{K}_{\cl}(x,y) &= (1 - Q_1Q_2y^{-1})(1 - y^{-1}) + (1+Q_1Q_3)x \notag\\
   &\quad\mbox{} - Q_1(1+Q_2Q_3)xy^{-1} + Q_1Q_3x^2 
\end{align}
in $(x,y^{-1})$.  As expected from the perspectives 
of mirror geometry, the Newton polygons of these polynomials 
have the same shape as the toric diagram in Figure \ref{fig1}.

\section{Flop transition}

Let us examine the flop transition from Figure \ref{fig1} 
to Figure \ref{fig6}.  After this move, the previous setup 
for defining the amplitude $Z^{\ctv}_{\beta_1\beta_2}$ 
turns into the setup shown in Figure \ref{fig7}.  
Note that the K\"ahler parameters after the flop transition 
are denoted by $P_1,P_2,P_3$; they are expected 
to be related to the K\"ahler parameters $Q_1,Q_2,Q_3$ 
before the transition by birational transformations. 

\begin{figure}
\centering
\includegraphics[scale=0.7]{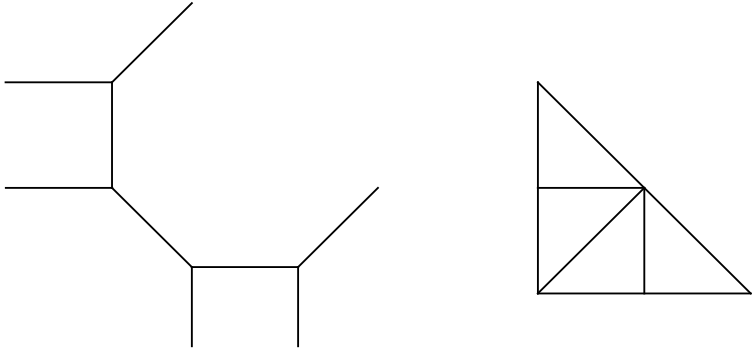}
\caption{Web and toric diagrams after flop}
\label{fig6}
\end{figure}

\begin{figure}
\centering
\includegraphics[scale=0.8]{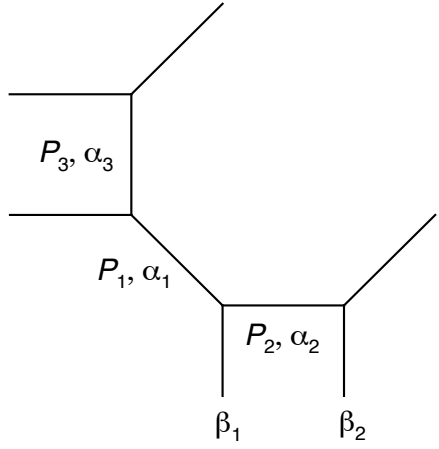}
\caption{Setup for open string amplitude $\hat{Z}^{\ctv}_{\beta_1\beta_2}$}
\label{fig7}
\end{figure}

Our method for calculating $Z^{\ctv}_{\beta_1\beta_2}$ 
can be extended to the amplitude $\hat{Z}^{\ctv}_{\beta_1\beta_2}$ 
of Figure \ref{fig7} as follows.  

The sum over $\alpha_1,\alpha_2,\alpha_3\in\calP$ can be 
decomposed to a partial sum $\hat{Z}^{\alpha_3}_{\beta_1\beta_2}$ 
with respect to $\alpha_1,\alpha_2$ at the first stage 
and a sum with respect to $\alpha_3$ at the next stage as 
\beq
  \hat{Z}^{\ctv}_{\beta_1\beta_2} 
  = \sum_{\alpha_3\in\calP}\hat{Z}_{\beta_1\beta_2|\alpha_3}
    (-P_3)^{|\alpha_3|}(-1)^{|\alpha_3|}q^{-\kappa(\alpha_3)/2}
    C_{\tp{\alpha}_3\emptyset\emptyset}. 
  \label{hZb1b2-fac}
\eeq
The extra factor $(-1)^{|\alpha_3|}q^{-\kappa(\alpha_3)/2}$ is inserted 
by the gluing rule.  The framing number (\ref{f-number}) 
along the internal line carrying $\alpha_3$ is equal to $1$.  

The partial sum $\hat{Z}_{\beta_1\beta_2|\alpha_3}$ 
is an open string amplitude of the double-$\PP^1$ diagram 
shown in Figure \ref{fig8}.  Since this is an on-strip diagram, 
the amplitude can be calculated explicitly as 
\begin{align}
  \hat{Z}_{\beta_1\beta_2|\alpha_3} 
  &= s_{\tp{\beta}_1}(q^{-\rho})s_{\tp{\beta}_2}(q^{-\rho})
     s_{\tp{\alpha}_3}(q^{-\rho}) \prod_{i,j=1}^\infty 
      (1 - P_2q^{-\beta_{1i}-\tp{\beta}_{2j}+i+j-1})^{-1} \notag\\
  &\quad\mbox{} \times 
      \prod_{i,j=1}^\infty(1 - P_1q^{-\tp{\alpha}_{3i}-\tp{\beta}_{1j}+i+j-1}) 
      \prod_{i,j=1}^\infty(1 - P_1P_2q^{-\tp{\alpha}_{3i}-\tp{\beta}_{2j}+i+j-1}). 
  \label{hZb1b2a3}
\end{align}
This amplitude is related to its counterpart $Z_{\beta_1\beta_2|\alpha_3}$ 
by the same flop operation as the move 
from Figure \ref{fig1} to Figure \ref{fig6}. 
One see form (\ref{Zb1b2a3}) and (\ref{hZb1b2a3}) 
that $\hat{Z}_{\beta_1\beta_2|\alpha_3}$ is almost identical 
to $Z_{\beta_1\beta_2|\alpha_3}$ if the K\"ahler parameters 
are related as 
\beq
  P_1P_2 = Q_2,\quad P_2 = Q_1Q_2. 
  \label{PQ12-rel}
\eeq
The only discrepancy lies in the infinite products 
$\prod_{i,j=1}^\infty(1 - Q_1q^{\cdots})$ in (\ref{Zb1b2a3}) 
and $\prod_{i,j=1}^\infty(1 - P_1q^{\cdots})$ in (\ref{hZb1b2a3}). 

\begin{figure}
\centering
\includegraphics[scale=0.8]{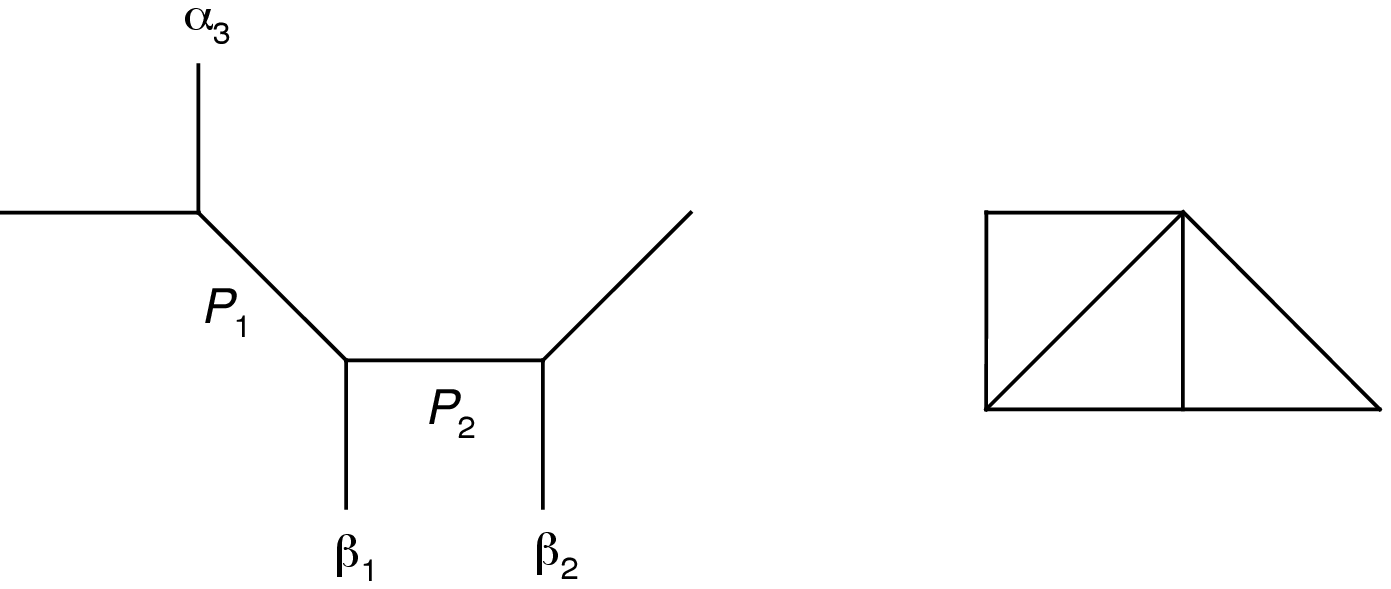}
\caption{Double-$\PP^1$ diagram defining $\hat{Z}_{\beta_1\beta_2|\alpha_3}$}
\label{fig8}
\end{figure}

Substituting (\ref{hZb1b2a3}) and (\ref{Ca3}) 
in (\ref{hZb1b2-fac}), we obtain the following expression 
of $\hat{Z}^{\ctv}_{\beta_1\beta_2}$: 
\begin{align}
  \hat{Z}^{\ctv}_{\beta_1\beta_2} 
  &= s_{\tp{\beta}_1}(q^{-\rho})s_{\tp{\beta}_2}(q^{-\rho})
     \prod_{i,j=1}^\infty(1 - P_2q^{-\beta_{1i}-\tp{\beta}_{2j}+i+j-1})^{-1}
     \notag\\
  &\quad\mbox{}\times \sum_{\alpha_3\in\calP}
     s_{\alpha_3}(q^{-\rho})^2P_3^{|\alpha_3|}  
     \prod_{i,j=1}^\infty(1 - P_1q^{-\tp{\alpha}_{3i}-\tp{\beta}_{1j}+i+j-1})
     \notag\\
  &\quad\quad\quad\mbox{}\times 
     \prod_{i,j=1}^\infty(1 - P_1P_2q^{-\tp{\alpha}_{3i}-\tp{\beta}_{2j}+i+j-1}). 
  \label{hZb1b2-red}
\end{align}
Note that we have used the identity (\ref{tp-rel}) as well 
to rewrite the first part of the summand as 
\beqnn
  s_{\tp{\alpha}_3}(q^{-\rho})s_{\alpha_3}(q^{-\rho})q^{-\kappa(\alpha_3)/2}
  = s_{\alpha_3}(q^{-\rho})^2. 
\eeqnn 
Thus, in contrast with (\ref{Zb1b2-red}), 
the sum in this case resembles the partition function 
of the ordinary melting crystal model \cite{NT07,NT08} 
for which the main part of the Boltzmann weight 
is $s_{\alpha_3}(q^{-\rho})^2$ 
rather than $s_{\tp{\alpha}_3}(q^{-\rho})s_{\alpha_3}(q^{-\rho})$. 

The sum in (\ref{hZb1b2-red}) can be calculated 
in more or less the same way as the case of (\ref{Zb1b2-red}).  
Let us show the final result only. 

\begin{theorem}
The open string amplitude $\hat{Z}^{\ctv}_{\beta_1\beta_2}$ 
can be expressed as 
\begin{align}
  \hat{Z}^{\ctv}_{\beta_1\beta_2} 
  &= q^{\kappa(\beta_1)/2+\kappa(\beta_2)/2}
     \prod_{i,j=1}^\infty(1 - P_2q^{-\beta_{1i}-\tp{\beta}_{2j}+i+j-1})^{-1}
     \notag\\
  &\quad\mbox{}\times 
     \langle\tp{\beta}_1| 
     \Gamma'_{-}(q^{-\rho})\Gamma'_{+}(q^{-\rho})(-P_1)^{L_0}
     \Gamma_{-}(q^{-\rho})\Gamma_{+}(q^{-\rho})P_3^{L_0} \notag\\
  &\quad\mbox{}\times 
     \Gamma_{-}(q^{-\rho})\Gamma_{+}(q^{-\rho})(-P_1P_2)^{L_0}
     \Gamma'_{-}(q^{-\rho})\Gamma'_{+}(q^{-\rho})
     |\tp{\beta}_2\rangle. 
  \label{hZb1b2-fin}
\end{align}
\end{theorem}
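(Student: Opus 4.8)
The plan is to mimic, step by step, the derivation of the final expression (\ref{Zb1b2-fin}) for $Z^{\ctv}_{\beta_1\beta_2}$, now applied to the sum in (\ref{hZb1b2-red}). First I would note that in (\ref{hZb1b2-red}) the summand carries $s_{\alpha_3}(q^{-\rho})^2$ (the ordinary-melting-crystal weight) rather than $s_{\tp{\alpha}_3}(q^{-\rho})s_{\alpha_3}(q^{-\rho})$, so in the fermionic translation I would use $s_{\alpha_3}(q^{-\rho}) = \langle 0|\Gamma_{+}(q^{-\rho})|\alpha_3\rangle = \langle\alpha_3|\Gamma_{-}(q^{-\rho})|0\rangle$ on \emph{both} sides, in place of the one $\Gamma'$ and one $\Gamma$ used before. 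Then I would rewrite the two infinite products $\prod_{i,j}(1-P_1 q^{\cdots})$ and $\prod_{i,j}(1-P_1P_2 q^{\cdots})$ in exponential form; since both now involve $\tp{\alpha}_{3i}$ (the rows of the conjugate partition), each sum $\sum_i q^{-k(\tp{\alpha}_{3i}-i+1)}$ is the eigenvalue controlled by $V^{(k)}_0$ via (\ref{V(k)0-action}), so both products turn into operators built from $V^{(k)}_0$ (with no $V^{(-k)}_0$ appearing). Finally $P_3^{|\alpha_3|}$ is the diagonal matrix element of $P_3^{L_0}$, and the partition of unity converts the $\alpha_3$-sum into a vacuum expectation value.

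Next I would apply the shift symmetries. The key point is that now both exponentials contain $V^{(k)}_0 - q^k/(1-q^k)$, which by (\ref{SS2}) and (\ref{SS3}) get conjugated into $J_{-k}$'s through $\Gamma_{-}(q^{-\rho})\Gamma_{+}(q^{-\rho})$ and $q^{K/2}$. One exponential sits to the left of $P_3^{L_0}$ and one to the right (acting on the bra and ket respectively), so I would insert the appropriate $\Gamma_{-}(q^{-\rho})$ or $\Gamma_{+}(q^{-\rho})$ next to $\langle 0|$ and $|0\rangle$ — exactly as in the passage producing (\ref{vev1-left}) and (\ref{vev1-right}) — and absorb the resulting vertex operators $\Gamma_{-}(q^{-\tp{\beta}_1-\rho})$ and $\Gamma_{-}(q^{-\tp{\beta}_2-\rho})$ via conjugation by $(-P_1)^{L_0}$ and $(-P_1P_2)^{L_0}$. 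This turns the $\alpha_3$-sum into a vacuum expectation value sandwiched between such vertex operators, accompanied by a chain of $\Gamma'_{-}\Gamma'_{+}$, $q^{\pm K/2}$, and the various $(-P)^{L_0}$ factors.

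At that stage I would invoke the operator–state correspondence of Lemma (\ref{2leg-CS1})–(\ref{2leg-CS2}) to replace $s_{\tp{\beta}_1}(q^{-\rho})\Gamma_{-}(q^{-\tp{\beta}_1-\rho})|0\rangle$-type expressions by the states $|\tp{\beta}_1\rangle$, $\langle\tp{\beta}_2|$ (in their appropriate dual forms), picking up the prefactors $q^{\kappa(\beta_1)/2}$ and $q^{\kappa(\beta_2)/2}$ together with the $q^{\pm K/2}$'s. The factor $s_{\tp{\beta}_1}(q^{-\rho})s_{\tp{\beta}_2}(q^{-\rho})$ already present in (\ref{hZb1b2-red}) is exactly what is consumed in this substitution, while the $q^{\kappa(\beta_1)/2+\kappa(\beta_2)/2}$ survives; the leftover $q^{K/2}$'s and $q^{-K/2}$'s then cancel against the $(-P_1)^{L_0}$, $(-P_1P_2)^{L_0}$, $P_3^{L_0}$ insertions exactly as in (\ref{cancellation}), since $q^{K/2}c^{L_0}q^{-K/2}=c^{L_0}$. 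The untouched factor $\prod_{i,j}(1-P_2 q^{-\beta_{1i}-\tp{\beta}_{2j}+i+j-1})^{-1}$ carries through unchanged. Collecting everything yields precisely (\ref{hZb1b2-fin}).

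The main obstacle I anticipate is bookkeeping rather than conceptual: because both infinite products now feed into $V^{(k)}_0$ (rather than one $V^{(-k)}_0$ and one $V^{(k)}_0$), the ordering of the vertex operators $\Gamma_{-}(q^{-\rho})$, $\Gamma_{+}(q^{-\rho})$, $\Gamma'_{-}(q^{-\rho})$, $\Gamma'_{+}(q^{-\rho})$ in the final chain — and in particular which $\Gamma$ versus $\Gamma'$ ends up where, and on which side of each $(-P)^{L_0}$ — is governed by where the shift symmetry (\ref{SS2}) forces the conjugating pair $\Gamma_{-}\Gamma_{+}$ to be inserted. Getting the two-leg correspondence Lemma applied with the right partition ($\beta_1$ versus $\tp{\beta}_1$) on the right side is the delicate point, since that is what determines whether one lands on $|\tp{\beta}_2\rangle$ or $|\beta_2\rangle$ and whether the prefactor is $q^{\kappa/2}$ or $1$. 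One should also verify that the framing-number-$1$ gluing factor $(-1)^{|\alpha_3|}q^{-\kappa(\alpha_3)/2}$, combined with $s_{\tp{\alpha}_3}(q^{-\rho})$ from the upper vertex, is correctly absorbed into $s_{\alpha_3}(q^{-\rho})^2$ via (\ref{tp-rel}) before the fermionization begins — this is already indicated in the text but is worth double-checking, as a sign or a $q^{\pm\kappa/2}$ error there would propagate to the final prefactor.
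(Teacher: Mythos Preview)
Your proposal is correct and follows essentially the same route the paper sketches (``in more or less the same way''): fermionize the sum in (\ref{hZb1b2-red}) with $s_{\alpha_3}(q^{-\rho})^2$ via $\Gamma_\pm$, convert both $\tp{\alpha}_3$-dependent products to $V^{(k)}_0$ operators, apply shift symmetries, then use the two-leg correspondence (\ref{2leg-CS2}) on both ends to produce $\langle\tp{\beta}_1|$, $|\tp{\beta}_2\rangle$ and the two factors $q^{\kappa(\beta_1)/2}$, $q^{\kappa(\beta_2)/2}$. One small correction to your bookkeeping: the exponential sitting on the \emph{bra} side cannot literally be handled by (\ref{SS2}) as written (that identity pushes $V^{(k)}_0$ to the right through $\Gamma_-\Gamma_+$, giving $V^{(k)}_{-k}$ and hence $J_{-k}$); for the left block you need the companion shift symmetry sending $V^{(k)}_0$ to $V^{(k)}_{k}$ (and the $J_k$ analogue of (\ref{SS3})), which is part of the same family in \cite{NT07,NT08,Takasaki13} but not restated here --- this is exactly the ordering issue you flagged, and once it is used the chain $\Gamma'_-\Gamma'_+\,(-P_1)^{L_0}\,\Gamma_-\Gamma_+\,P_3^{L_0}\,\Gamma_-\Gamma_+\,(-P_1P_2)^{L_0}\,\Gamma'_-\Gamma'_+$ falls out as claimed.
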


The main part $\langle\tp{\beta}_1|\cdots|\tp{\beta}_2\rangle$ 
of this expression is essentially the open string amplitude 
of the web diagram shown in Figure \ref{fig9}.  
This web diagram can be derived the web diagram 
of Figure \ref{fig5} by the same flop operation 
as the move from Figure \ref{fig1} to Figure \ref{fig6}. 

\begin{figure}
\centering
\includegraphics[scale=0.8]{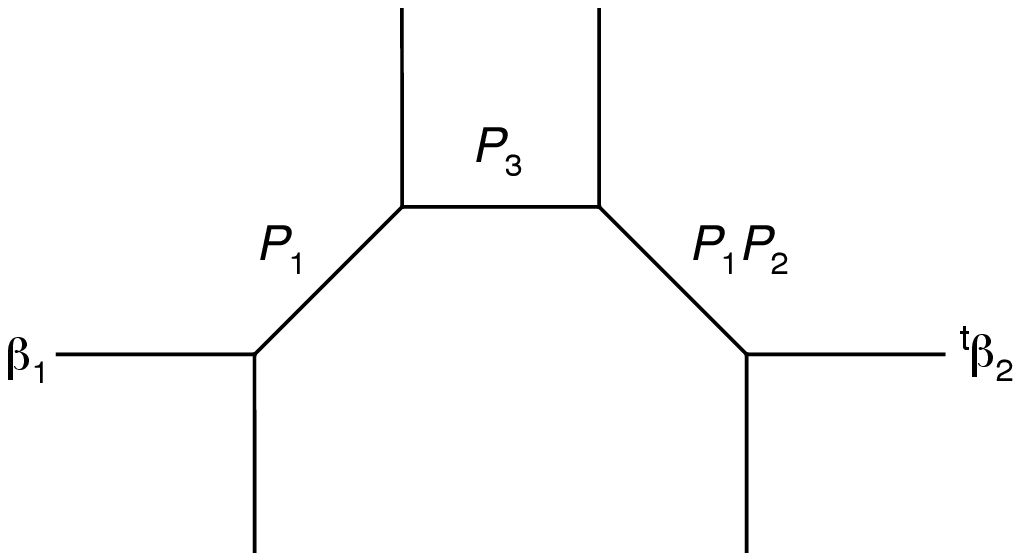}
\caption{Web diagram emerging in (\ref{hZb1b2-fin})} 
\label{fig9}
\end{figure}

To see how this part is related to the main part 
of (\ref{Zb1b2-fin}), let us use the commutation relations 
(\ref{Gamma-com-rel}) to exchange the order of 
the first four vertex operators therein as 
\beqnn
\begin{aligned}
  &\langle\tp{\beta}_1| 
     \Gamma'_{-}(q^{-\rho})\Gamma'_{+}(q^{-\rho})(-P_1)^{L_0}
     \Gamma_{-}(q^{-\rho})\Gamma_{+}(q^{-\rho})P_3^{L_0} \\
  &= \langle\tp{\beta}_1| (-P_1)^{L_0} 
     \Gamma'_{-}(-P_1^{-1}q^{-\rho})\Gamma'_{+}(-P_1q^{-\rho})
     \Gamma_{-}(q^{-\rho})\Gamma_{+}(q^{-\rho})P_3^{L_0} \\
  &= (-P_1)^{|\beta_1|}
     \prod_{i,j=1}^\infty(1 - P_1q^{i+j-1})(1 - P_1^{-1}q^{i+j-1})^{-1}\\
  &\quad\mbox{}\times
     \langle\tp{\beta}_1| 
     \Gamma_{-}(q^{-\rho})\Gamma_{+}(q^{-\rho})
     \Gamma'_{-}(-P_1^{-1}q^{-\rho})\Gamma'_{+}(-P_1q^{-\rho})P_3^{L_0} \\
  &= (-P_1)^{|\beta_1|}
     \prod_{i,j=1}^\infty(1 - P_1q^{i+j-1})(1 - P_1^{-1}q^{i+j-1})^{-1}\\
  &\quad\mbox{}\times 
     \langle\tp{\beta}_1| 
     \Gamma_{-}(q^{-\rho})\Gamma_{+}(q^{-\rho})(-P_1^{-1})^{L_0}
     \Gamma'_{-}(q^{-\rho})\Gamma'_{+}(q^{-\rho})(-P_1P_3)^{L_0}. 
\end{aligned}
\eeqnn
This shows that if the two sets of K\"ahler parameters 
are matched as 
\beq
  P_1^{-1} = Q_1, \quad P_1P_3 = Q_3,\quad P_1P_2 = Q_2, 
  \label{PQ123-rel}
\eeq
$\hat{Z}^{\ctv}_{\beta_1\beta_2}$ and $Z^{\ctv}_{\beta_1\beta_2}$ 
are related as 
\beq
  \hat{Z}^{\ctv}_{\beta_1\beta_2} 
  = q^{\kappa(\beta_1)/2}(-P_1)^{|\beta_1|}
    \prod_{i,j=1}^\infty(1 - P_1q^{i+j-1})(1 - P_1^{-1}q^{i+j-1})^{-1}
    \cdot Z^{\ctv}_{\beta_1\beta_2}. 
  \label{ZhZ-rel}
\eeq
Note that (\ref{PQ123-rel}) is consistent with (\ref{PQ12-rel}). 
These matching rules of parameters agree with the known result 
for the partition functions \cite{IKP04,Sulkowski06,KM06}. 

\begin{remark}\label{remark5}
It is instructive to examine a different 
cut-and-glue procedure in this case.  
Let us try to cut the middle internal line 
(to which $Q_1$ and $\alpha_1$ are assigned) 
of the web diagram (see Figure \ref{fig7}).  
The cutting procedure yields two subdiagrams 
of the on-strip type.  They are glued together 
with the edge weight $(-P_1)^{|\alpha_1|}$. 
Note that the framing number in this case 
is equal to $0$.  Thus the total amplitude 
can be expressed as 
\beq
  \hat{Z}^{\ctv}_{\beta_1\beta_2} 
  = \sum_{\alpha_1\in\calP}\hat{Z}'_{\alpha_1}(-P_1)^{|\alpha_1|}
    \hat{Z}''_{\alpha_1|\beta_1\beta_2},
  \label{hZb1b2-fac2}
\eeq
where $\hat{Z}'_{\alpha_1}$ and $\hat{Z}''_{\alpha_1|\beta_1\beta_2}$ 
are contributions of the two on-strip subdiagrams, i.e., 
\beqnn
  \hat{Z}'_{\alpha_1} 
  = \langle0|\Gamma'_{-}(q^{-\rho})\Gamma'_{+}(q^{-\rho}) 
    P_3^{L_0}\Gamma'_{-}(q^{-\rho})\Gamma'_{+}(q^{-\rho})
    |\alpha_1\rangle 
\eeqnn
and 
\begin{align*}
  \hat{Z}''_{\alpha_1|\beta_1\beta_2} 
  &= q^{-\kappa(\alpha_1)/2}
     s_{\tp{\beta}_1}(q^{-\rho})s_{\tp{\beta}_2}(q^{-\rho})\notag\\
  &\quad\mbox{}\times
     \langle\alpha_1|\Gamma'_{-}(q^{-\tp{\beta}_1-\rho})
     \Gamma'_{+}(q^{-\beta_1-\rho})P_2^{L_0}
     \Gamma'_{-}(q^{-\tp{\beta}_2-\rho})
     \Gamma'_{+}(q^{-\beta_2-\rho})|0\rangle. 
\end{align*}
Plugging these expressions into (\ref{hZb1b2-fac2}) 
leads to yet another fermionic expression 
of $\hat{Z}^{\ctv}_{\beta_1\beta_2}$: 
\begin{align}
  \hat{Z}^{\ctv}_{\beta_1\beta_2}
  &= s_{\tp{\beta}_1}(q^{-\rho})s_{\tp{\beta}_2}(q^{-\rho})\notag\\
  &\quad\mbox{}\times \langle0|\Gamma'_{-}(q^{-\rho})\Gamma'_{+}(q^{-\rho})
     P_3^{L_0}\Gamma'_{-}(q^{-\rho})\Gamma'_{+}(q^{-\rho})
     q^{-K/2}(-P_1)^{L_0}\notag\\
  &\quad\mbox{}\times 
     \Gamma'_{-}(q^{-\tp{\beta}_1-\rho})
     \Gamma'_{+}(q^{-\beta_1-\rho})P_2^{L_0}
     \Gamma'_{-}(q^{-\tp{\beta}_2-\rho})
     \Gamma'_{+}(q^{-\beta_2-\rho})|0\rangle. 
  \label{hZb1b2-bis}
\end{align}
This expression looks very similar to an on-strip amplitude.  
The operator product in this expression, however, contains 
the operator $q^{-K/2}$ that does not appear in on-strip amplitudes.  
Because of this operator, one cannot calculate this expression directly.  
In contrast, if one applies the cut-and-glue procedure 
to an on-strip amplitude, operators of the form $q^{\pm K/2}$ 
do not appear or cancel out in the outcome 
of calculation\footnote{This is a key to prove the fermionic formula 
(\ref{Zstrip-fermion}) of on-strip amplitudes by induction.}. 
This cancellation mechanism is a consequence 
of the {\it linear} shape of the on-strip diagram.  
In this respect, the web diagram of Figure \ref{fig6} 
is a {\it chain} of on-strip diagrams, and its web diagram 
is {\it bent} to ninety degrees in the middle.  
It is this bend that generates the operator $q^{-K/2}$.  
Actually, the present case is special in the sense that 
this difficulty can be circumvented by the foregoing 
different cut-and-glue description\footnote{
One can also convert (\ref{hZb1b2-bis}) to a more tractable form 
with the aid of techniques used in Sections 3 and 4.  
This eventually leads to the same result as presented therein.}. 
In a general case, such an escape route is not prepared. 
\end{remark}

\section{Conclusion}

Let us summarize what we have done in this paper.  

\paragraph{Calculation of  open string amplitudes}
We reformulated the open string amplitude $Z^{\ctv}_{\beta_1\beta_2}$ 
of Figure \ref{fig2} in the partially summed form (\ref{Zb1b2-fac}), 
and derived the reduced expression (\ref{Zb1b2-red}). 
The main part of (\ref{Zb1b2-red}) turns out 
to be similar to the partition function 
of the modified melting crystal model.  Firstly, 
the main part $s_{\tp{\alpha}_3}(q^{-\rho})s_{\alpha_3}(q^{-\rho})$ 
of the summand is exactly the same.  Secondly,  
the other part can be described by matrix elements 
of the {\it diagonal} operators $V^{(\pm)}_0$ 
in the quantum torus algebra. 
This is also a characteristic of the external potentials 
in the melting crystal models. We could thereby apply 
the method for the melting crystal models to derive 
the fermionic expression (\ref{Zb1b2-med}) 
of $Z^{\ctv}_{\beta_1\beta_2}$. This expression was 
further converted to the final expression 
(\ref{Zb1b2-fin}) of $Z^{\ctv}_{\beta_1\beta_2}$, 
which is a product of a simple prefactor and 
the open string amplitude $Y_{\beta_1\beta_2}$ 
of a new on-strip diagram.  

\paragraph{Derivation of $q$-difference equations} 
We derived $q$-difference equations for 
the generating functions $\Psi(x),\,\tilde{\Psi}(x)$ 
of the normalized amplitudes 
$Z^{\ctv}_{\beta_1\beta_2}/Z^{\ctv}_{\emptyset\emptyset}$ 
specialized to $\beta_1 = (1^k),\,(k)$, $k = 0,1,2,\ldots$, 
and $\beta_2 = \emptyset$.  The derivation makes 
full use of the factorized form of (\ref{Zb1b2-fin}).  
Namely, we first derived the $q$-difference equations 
(\ref{Phi-qd-eq}) and (\ref{tPhi-qd-eq}) 
for the generating functions $\Phi(x),\,\tilde{\Phi}(x)$ 
obtained from $Y_{\beta_1\beta_2}/Y_{\emptyset\emptyset}$. 
These equations are transformed 
to the $q$-difference equations (\ref{Psi-qd-eq}) 
and (\ref{tPsi-qd-eq}) for $\Psi(x),\,\tilde{\Psi}(x)$. 
This is the place where the prefactor of $Y_{\beta_1\beta_2}$ 
in (\ref{Zb1b2-fin}) plays a role.  
We examined the structure of these $q$-difference equations 
and found that they can be reduced to the simpler equations 
(\ref{red-qd-eq}).  It is these reduced equations 
that should be interpreted as the defining equation 
of a quantum mirror curve. 

\paragraph{Flop transition}
We considered the flop transition from Figure \ref{fig1} 
to Figure \ref{fig6}.  The open string amplitude 
$\hat{Z}^{\ctv}_{\beta_1\beta_2}$ after the transition 
can be calculated in much the same way 
as in the case of $Z^{\ctv}_{\beta_1\beta_2}$.  
We confirmed that $\hat{Z}^\ctv_{\beta_1\beta_2}$  
can be matched to the amplitude $Z^{\ctv}_{\beta_1\beta_2}$ 
by the birational transformations (\ref{PQ123-rel}) 
of the K\"ahler parameters. 
\bigskip

On the other hand, we have been unable to derive 
$q$-difference equations in other configurations 
of partitions on the external lines of the web diagram 
(except for those that can be derived from the setup 
of Section 2 by symmetries or specializations of the amplitude).  
A major obstacle is the emergence of $q^{\pm K/2}$'s  
that do not cancel out in a fermionic expression 
of the amplitude as opposed to the case of (\ref{cancellation}). 
Because of this obstacle\footnote{The situation presented 
in Remark \ref{remark5} is similar, but the difficulty 
in that case can be circumvented.}, 
the fermionic expression in such a case cannot be converted 
to a form from which a $q$-difference equation can be read out.  

We have encountered the same difficulty in an attempt 
to extend our results to more general tree-like web diagrams 
studied by Karp, Liu and Mari\~no \cite{KLM05}.  
Our attempt has been unsuccessful not only 
for open string amplitudes, but also 
for the closed string partition function.  
We believe that this difficulty is of technical nature 
and can be overcome by a new computational idea.

\subsection*{Acknowledgements}

The authors are grateful to Motohico Mulase 
for valuable comments. 
This work is partly supported by JSPS Kakenhi Grant 
No. 24540223, No. 25400111 and No. 15K04912.

\appendix

\section{Amplitudes of on-strip geometry}

The toric diagram of on-strip geometry is a triangulation 
of the strip of height $1$ to triangles of area $1/2$ 
(see Figure \ref{fig10}).  
The associated web diagram is a connected acyclic graph.  
If the toric graph comprises $N$ triangles,  
the web diagram has $N$ vertices, $N-1$ internal lines 
and $N+2$ external lines.  The $N$ external lines 
other than the leftmost and rightmost ones are vertical.  
For brevity, the external lines are also referred to 
as ``legs''.  

\begin{figure}
\centering
\includegraphics[scale=0.9]{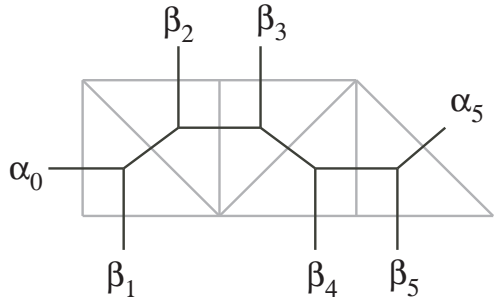}
\caption{Web diagram on triangulated strip} 
\label{fig10}
\end{figure}

We assign the K\"ahler parameters $Q_1,\ldots,Q_{N-1}$ 
to the internal lines, the partitions $\beta_1,\ldots,\beta_N$ 
to the vertical external lines, and the partitions $\alpha_0,\alpha_N$ 
to the leftmost and rightmost external lines.  
Let $Z^{\alpha_0\alpha_N}_{\beta_1\cdots\beta_N}$ denote 
the open string amplitude in this setup.  
This amplitude is defined as a sum of the product 
of vertex and edge weights with respect to 
the partitions $\alpha_1,\ldots,\alpha_{N-1}$ on the internal lines. 
In the case of $\alpha_0 = \alpha_N = \emptyset$, 
Iqbal and Kashani-Poor \cite{IKP04} calculated this sum 
in a closed form by skillful use of the Cauchy identities 
for skew Schur functions.  Their result can be reformulated, 
without restriction to $\alpha_0 = \alpha_N = \emptyset$, 
in the language of fermions \cite{EK03,Nagao09,Sulkowski09}. 

Following the notations of Nagao \cite{Nagao09} 
and Su{\l}kowski \cite{Sulkowski09}, let us define 
the sign (or type) $\sigma_n = \pm 1$ of the $n$-th vertex as: 
\begin{itemize}
\item[(i)] $\sigma_n = +1$ if the vertical leg points up, 
\item[(ii)] $\sigma_n = -1$ if the vertical leg points down. 
\end{itemize}
For example, in the case of the web diagram of Figure \ref{fig10}, 
\beqnn
  \sigma_1 = -1,\quad \sigma_2 = +1,\quad \sigma_3 = +1,\quad 
  \sigma_4 = -1,\quad \sigma_5 = -1. 
\eeqnn
These data are used to show the types of vertex operators as 
\beqnn
  \Gamma^\sigma_{\pm}(\bsx) 
  = \begin{cases}
    \Gamma_{\pm}(\bsx) &\text{if $\sigma = +1$},\\
    \Gamma'_{\pm}(\bsx) &\text{if $\sigma = -1$}. 
    \end{cases}
\eeqnn
Let us further introduce the auxiliary notations 
\beqnn
  \beta^{(n)} 
  = \begin{cases}
    \beta_n &\text{if $\sigma_n = +1$},\\
    \tp{\beta}_n &\text{if $\sigma_n = -1$}, 
    \end{cases}
  \qquad 
  Q_{mn} = Q_mQ_{m+1}\cdots Q_{n-1}. 
\eeqnn

With these notations, the fermionic expression 
of $Z^{\alpha_0\alpha_N}_{\beta_1\cdots\beta_N}$ read 
\begin{align}
  Z^{\alpha_0\alpha_N}_{\beta_1\cdots\beta_N} 
  &= q^{(1-\sigma_1)\kappa(\alpha_0)/4}
     q^{(1+\sigma_N)\kappa(\alpha_N)/4}
     s_{\tp{\beta}_1}(q^{-\rho})\cdots s_{\tp{\beta}_N}(q^{-\rho}) \notag\\
  &\quad\mbox{}\times 
     \langle\tp{\alpha}_0| \Gamma^{\sigma_1}_{-}(q^{-\beta^{(1)}-\rho})
     \Gamma^{\sigma_1}_{+}(q^{-\tp{\beta}^{(1)}-\rho})(\sigma_1Q_1\sigma_2)^{L_0}
     \cdots \notag\\
  &\quad\mbox{}\times  
     \Gamma^{\sigma_{N-1}}_{-}(q^{-\beta^{(N-1)}-\rho})
     \Gamma^{\sigma_{N-1}}_{+}(q^{-\tp{\beta}^{(N-1)}-\rho})
     (\sigma_{N-1}Q_{N-1}\sigma_N)^{L_0} \notag\\
  &\quad\mbox{}\times
     \Gamma^{\sigma_N}_{-}(q^{-\beta^{(N)}-\rho})
     \Gamma^{\sigma_N}_{+}(q^{-\tp{\beta}^{(N)}-\rho})|\alpha_N\rangle.
  \label{Zstrip-fermion}
\end{align}
In the case where $N = 1$, this formula reduces 
to the fermionic expression (\ref{Cabc-fermion}) 
of the vertex weight itself.  Starting from (\ref{Cabc-fermion}), 
one can prove this formula by induction.  
If $\alpha_0 = \alpha_N = \emptyset$, 
one can use the commutation relations (\ref{Gamma-com-rel}) 
to move $\Gamma^\sigma_{-}$'s to the left and 
$\Gamma^\sigma_{+}$'s to the right until 
they hit $\langle 0|$ and $|0\rangle$ and disappear.  
This yields the explicit formula 
\begin{align}
  Z^{\emptyset\emptyset}_{\beta_1\cdots\beta_N} 
  &= s_{\tp{\beta}_1}(q^{-\rho})\cdots s_{\tp{\beta}_N}(q^{-\rho}) \notag\\
  &\mbox{}\times \prod_{1\leq m<n\leq N}\prod_{i,j=1}^\infty 
     (1 - Q_{mn}q^{-\tp{\beta}^{(m)}_i-\beta^{(n)}_j+i+j-1})^{-\sigma_m\sigma_n} 
\end{align}
of Iqbal and Kashani-Poor \cite{IKP04}.

\section{Direct proof of two-leg cyclic symmetry} 

As another application of the techniques used 
in Section 3, we present a direct proof of the identities 
(\ref{2leg-CS4}) and (\ref{2leg-CS5}) that amounts 
to the cyclic symmetry of two-leg vertices.  
Actually, these two identities are equivalent, 
and can be reduced to the following one: 
\beq
  s_\lambda(q^{-\rho})s_\mu(q^{-\lambda-\rho}) 
  = \langle\mu|q^{-K/2}\Gamma'_{-}(q^{-\rho})
    \Gamma'_{+}(q^{-\rho})q^{-K/2}|\lambda\rangle. 
  \label{2leg-CSfin}
\eeq
It is this identity that we prove here.  
Note that this identity implies the non-trivial relation 
\beq
  s_\lambda(q^{-\rho})s_\mu(q^{-\lambda-\rho}) 
  = s_\mu(q^{-\rho})s_\lambda(q^{-\mu-\rho}), 
\eeq
from which the equivalence of (\ref{2leg-CS4}) 
and (\ref{2leg-CS5}) follows.  

We prove (\ref{2leg-CSfin}) by generating functions.  
Namely, we construct generating functions of both sides 
by the Schur functions $s_\mu(\bsx)$, $\bsx = (x_1,x_2,\ldots)$, 
and confirm that these generating functions are identical.  

It is easy to calculate the generating function 
of the left side of (\ref{2leg-CSfin}).  
By the Cauchy identity 
\beq
  \sum_{\mu\in\calP}s_\mu(\bsx)s_\mu(\bsy) 
  = \prod_{i,j=1}^\infty(1 - x_iy_j)^{-1}, 
  \quad \bsy = (y_1,y_2,\ldots), 
\eeq
of the Schur functions \cite{Mac-book}, 
the generating function of the left side 
of (\ref{2leg-CSfin}) can be expressed as 
\beq
  \sum_{\mu\in\calP}s_\mu(\bsx)
  s_\lambda(q^{-\rho})s_\mu(q^{-\lambda-\rho})
  = s_\lambda(q^{-\rho})\prod_{i,j=1}^\infty
      (1 - x_iq^{-\lambda_j+j-1/2})^{-1}. 
  \label{CSfin-lhs}
\eeq

On the other hand, constructing the generating function 
of the left side of (\ref{2leg-CSfin}) amounts to 
inserting $\Gamma_{+}(\bsx)$ to the right of $\langle 0|$ as 
\beqnn
\begin{aligned}
  &\sum_{\mu\in\calP}s_\mu(\bsx)\langle\mu|q^{-K/2}\Gamma'_{-}(q^{-\rho})
   \Gamma'_{+}(q^{-\rho})q^{-K/2}|\lambda\rangle\\
  &= \langle 0|\Gamma_{+}(\bsx)q^{-K/2}\Gamma'_{-}(q^{-\rho})
     \Gamma'_{+}(q^{-\rho})q^{-K/2}|\lambda\rangle\\
  &= \langle 0|\exp\left(\sum_{i,k=1}^\infty\frac{x_i^k}{k}J_k\right)
     q^{-K/2}\Gamma'_{-}(q^{-\rho})\Gamma'_{+}(q^{-\rho})q^{-K/2}
     |\lambda\rangle. 
\end{aligned}
\eeqnn
The subsequent calculation is very similar to Section 3.  
One can use (\ref{SS1}) and (\ref{SS3}) to rewrite 
the last quantity as 
\beqnn
\begin{aligned}
  &\langle 0|\exp\left(\sum_{i,k=1}^\infty\frac{x_i^k}{k}J_k\right)
     q^{-K/2}\Gamma'_{-}(q^{-\rho})\Gamma'_{+}(q^{-\rho})
     q^{-K/2}|\lambda\rangle\\
  &= \langle 0|q^{-K/2}\Gamma'_{-}(q^{-\rho})\Gamma'_{+}(q^{-\rho}) 
     \exp\left(\sum_{i,k=1}^\infty\frac{x_i^kq^{k/2}}{k}
       \left(V^{(-k)}_0 + \frac{1}{1-q^k}\right)
     \right) q^{-K/2}|\lambda\rangle\\
  &= \langle 0|\Gamma'_{+}(q^{-\rho})q^{-K/2}
     \exp\left(\sum_{i,k=1}^\infty\frac{x_i^kq^{k/2}}{k}
       \left(V^{(-k)}_0 + \frac{1}{1-q^k}\right)
     \right) |\lambda\rangle. 
\end{aligned}
\eeqnn
Note that the order of $\exp(\cdots)$ and $q^{K/2}$ 
has been exchanged because $V^{(-k)}_0$ commutes with $q^{K/2}$. 
By (\ref{V(-k)0-action}), the action of $\exp(\cdots)$ 
on $|\lambda\rangle$ can be expressed as  
\beqnn
\begin{aligned}
  &\exp\left(\sum_{i,k=1}^\infty\frac{x_i^kq^{k/2}}{k}
     \left(V^{(-k)}_0 + \frac{1}{1-q^k}\right)\right)|\lambda\rangle \\
  &= \exp\left(\sum_{i,k=1}^\infty\frac{x_i^kq^{k/2}}{k}
       \sum_{j=1}^\infty q^{-k(\lambda_j-j+1)}\right)|\lambda\rangle \\
  &= \prod_{i,j=1}^\infty \exp\left( \sum_{k=1}^\infty
       \frac{(x_iq^{-\lambda_j+j-1/2})^k}{k}\right) |\lambda\rangle \\
  &= \prod_{i,j=1}^\infty(1 - x_iq^{-\lambda_j+j-1/2})^{-1} 
     |\lambda\rangle. 
\end{aligned}
\eeqnn
Thus the generating function of the right side of (\ref{2leg-CSfin}) 
turns out to take such a form as 
\beqnn
\begin{aligned}
  &\sum_{\mu\in\calP}s_\mu(\bsx)\langle\mu|q^{-K/2}\Gamma'_{-}(q^{-\rho})
   \Gamma'_{+}(q^{-\rho})q^{-K/2}|\lambda\rangle \\
  &= \langle 0|\Gamma'_{+}(q^{-\rho})q^{-K/2}|\lambda\rangle 
     \prod_{i,j=1}^\infty(1 - x_iq^{-\lambda_j+j-1/2})^{-1} \\
  &= q^{-\kappa(\lambda)/2}s_{\tp{\lambda}}(q^{-\rho})
     \prod_{i,j=1}^\infty(1 - x_iq^{-\lambda_j+j-1/2})^{-1}. 
\end{aligned}
\eeqnn
By (\ref{tp-rel}), this coincides with (\ref{CSfin-lhs}).

\end{document}